\def\eqref#1{equation~\ref{#1}}
\def\1{\bm{1}}
\def\vzero{{\bm{0}}}
\def\vmu{{\bm{\mu}}}
\def\vtheta{{\bm{\theta}}}
\def\vphi{{\bm{\phi}}}
\def\vepsilon{{\bm{\epsilon}}}
\def\va{{\bm{a}}}
\def\vc{{\bm{c}}}
\def\vf{{\bm{f}}}
\def\vg{{\bm{g}}}
\def\vh{{\bm{h}}}
\def\vm{{\bm{m}}}
\def\vu{{\bm{u}}}
\def\vw{{\bm{w}}}
\def\vx{{\bm{x}}}
\def\vy{{\bm{y}}}
\def\vz{{\bm{z}}}
\def\mF{{\bm{F}}}
\def\mI{{\bm{I}}}
\def\mR{{\bm{R}}}
\DeclareMathAlphabet{\mathsfit}{\encodingdefault}{\sfdefault}{m}{sl}
\SetMathAlphabet{\mathsfit}{bold}{\encodingdefault}{\sfdefault}{bx}{n}
\def\gN{{\mathcal{N}}}
\def\sR{{\mathbb{R}}}
\newcommand{\E}{\mathbb{E}}
\newcommand{\sigmoid}{\sigma}
\newcommand{\cs}{{\vx}}
\newcommand{\md}{\mathrm{d}}
\newcommand{\EGNN}{{\mathrm{EGNN}}}
\newcommand{\Dec}{{\mathrm{Dec}}}
\newcommand{\HOMO}{{\mathrm{HOMO}}}
\newcommand{\LUMO}{{\mathrm{LUMO}}}
\newcommand{\ZCoM}{{X}}
\newcommand{\ucv}{{$\frac{\mathrm{cal}}{{\mathrm{mol}}}\mathrm{K}$}}
\newcommand{\ualpha}{{$\mathrm{Bohr}^3$}}
\declaretheorem{definition}
\declaretheorem{proposition}
\declaretheorem{remark}
\title{Equivariant Energy-Guided SDE for Inverse Molecular Design}
\author{Fan Bao$^1$~\thanks{Equal contribution. \quad $^\dagger$Correspondence to: C. Li and J. Zhu.}~~, Min Zhao$^{1~*}$, Zhongkai Hao$^1$, Peiyao Li$^1$, Chongxuan Li$^{2~3~\dagger}$, Jun Zhu$^{1~\dagger}$ \\
$^{1}$Dept. of Comp. Sci. \& Tech., Institute for AI, Tsinghua-Huawei Joint Center for AI\\
BNRist Center, State Key Lab for Intell. Tech. \& Sys., Tsinghua University, Beijing, China \\
$^{2}$Gaoling School of Artificial Intelligence, Renmin University of China, Beijing, China \\
$^{3}$Beijing Key Laboratory of Big Data Management and Analysis Methods , Beijing, China \\
\texttt{bf19@mails.tsinghua.edu.cn,~gracezhao1997@gmail.com,} \\
\texttt{\{hzj21,~lipy19\}@mails.tsinghua.edu.cn,} \\
\texttt{chongxuanli@ruc.edu.cn,~dcszj@tsinghua.edu.cn}
}
\begin{document}

\maketitle

\begin{abstract}

Inverse molecular design is critical in material science and drug discovery, where the generated molecules should satisfy certain desirable properties. In this paper, we propose \emph{equivariant energy-guided stochastic differential equations} (EEGSDE), a flexible framework for controllable 3D molecule generation under the guidance of an energy function in diffusion models. Formally, we show that EEGSDE naturally exploits the geometric symmetry in 3D molecular conformation, as long as the energy function is invariant to orthogonal transformations. Empirically, under the guidance of designed energy functions, EEGSDE significantly improves the baseline on QM9, in inverse molecular design targeted to quantum properties and molecular structures. Furthermore, EEGSDE is able to generate molecules with multiple target properties by combining the corresponding energy functions linearly.


\end{abstract}

\section{Introduction}

The discovery of new molecules with desired properties is critical in many fields, such as the drug and material design~\citep{hajduk2007decade,mandal2009rational,kang2006electrodes,pyzer2015high}. However, brute-force search in the overwhelming molecular space is extremely challenging. Recently, inverse molecular design~\citep{zunger2018inverse} provides an efficient way to explore the molecular space, which directly predicts promising molecules that exhibit desired properties.

A natural way of inverse molecular design is to train a conditional generative model~\citep{sanchez2018inverse}. Formally, it learns a distribution of molecules conditioned on certain properties from data, and new molecules are predicted by sampling from the distribution with the condition set to desired properties. Among them, \textit{equivariant diffusion models} (EDM)~\citep{hoogeboom2022equivariant} leverage the current state-of-art diffusion models~\citep{ho2020denoising}, which involves a forward process to perturb data and a reverse process to generate 3D molecules conditionally or unconditionally. While EDM generates stable and valid 3D molecules, we argue that a single conditional generative model is insufficient for generating accurate molecules that exhibit desired properties (see Table~\ref{tab:variant} and Table~\ref{tab:multi} for an empirical verification).


In this work, we propose \textit{equivariant energy-guided stochastic differential equations} (EEGSDE), a flexible framework for controllable 3D molecule generation under the guidance of an energy function in diffusion models. EEGSDE formalizes the generation process as an equivariant stochastic differential equation, and plugs in energy functions to improve the controllability of generation. Formally, we show that EEGSDE naturally exploits the geometric symmetry in 3D molecular conformation, as long as the energy function is invariant to orthogonal transformations.

We apply EEGSDE to various applications by carefully designing task-specific energy functions. When targeted to quantum properties, EEGSDE is able to generate more accurate molecules than EDM, e.g., reducing the mean absolute error by more than 30\% on the dipole moment property. 
When targeted to specific molecular structures, EEGSDE better capture the structure information in molecules than EDM, e.g, improving the similarity to target structures by more than 10\%.
Furthermore, EEGSDE is able to generate molecules targeted to multiple properties by combining the corresponding energy functions linearly. These demonstrate that our EEGSDE enables a flexible and controllable generation of molecules, providing a smart way to explore the chemical space.


\begin{figure}[t]
    \centering
    \includegraphics[width=0.85\linewidth]{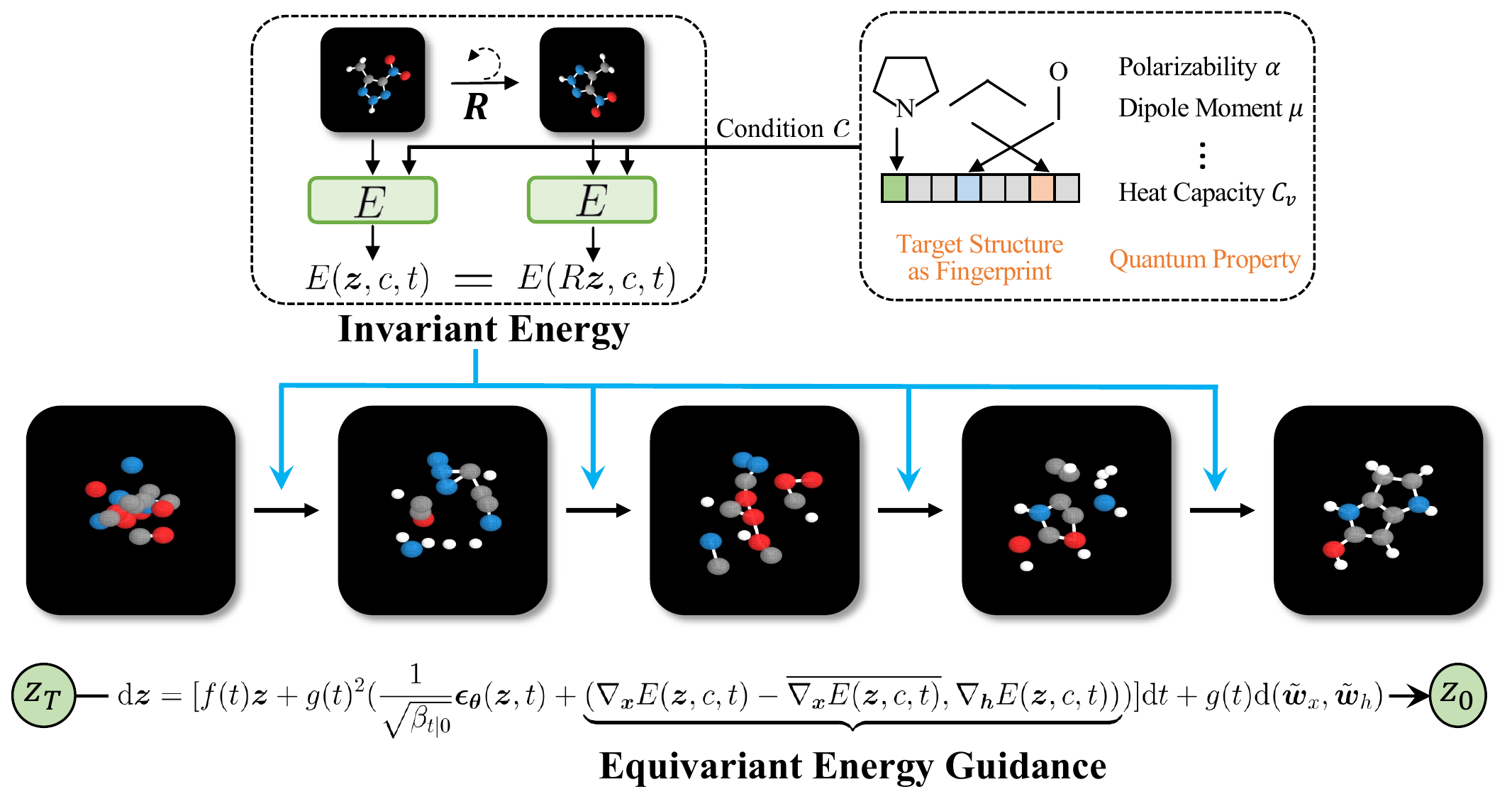}
    \vspace{-.2cm}
    \caption{Overview of our EEGSDE. EEGSDE iteratively generates molecules with desired properties (represented by the condition $c$) by adopting the guidance of energy functions in each step. As the energy function is invariant to rotational transformation $\mR$, its gradient (i.e., the energy guidance) is equivariant to $\mR$, and therefore the distribution of generated samples is invariant to $\mR$.}\vspace{-.3cm}
    \label{fig:eegsde}
\end{figure}

\section{Related Work}

\textbf{Diffusion models} are initially proposed by ~\cite{sohl2015deep}. Recently, they are better understood in theory by connecting it to score matching and stochastic differential equations (SDE)~\citep{ho2020denoising,song2020score}. After that, diffusion models have shown strong empirical performance in many applications~\cite{dhariwal2021diffusion,ramesh2022hierarchical,chen2020wavegrad,kong2020diffwave}. There are also variants proposed to improve or accelerate diffusion models~\citep{nichol2021improved,vahdat2021score,dockhorn2021score,bao2022analytic,bao2022estimating,salimans2022progressive,lu2022dpm}.

\textbf{Guidance} is a technique to control the generation process of diffusion models. Initially, \cite{song2020score,dhariwal2021diffusion} use classifier guidance to generate samples belonging to a class. Then, the guidance is extended to CLIP~\citep{radford2021learning} for text to image generation, and semantic-aware energy~\citep{zhao2022egsde} for image-to-image translation. Prior guidance methods focus on image data, and are nontrivial to apply to molecules, since they do not consider the geometric symmetry. In contrast, our work proposes a general guidance framework for 3D molecules, where an invariant energy function is employed to leverage the geometric symmetry of molecules.

\textbf{Molecule generation.} Several works attempt to model molecules as 3D objects via deep generative models~\cite{nesterov20203dmolnet,gebauer2019symmetry,satorras2021n,hoffmann2019generating,hoogeboom2022equivariant}. Among them, the most relevant one is the equivariant diffusion model (EDM)~\citep{hoogeboom2022equivariant}, which generates molecules in an iterative denoising manner. Benefiting from recent advances of diffusion models, EDM is stable to train and is able to generate high quality molecules. We provide a formal description of EDM in Section~\ref{sec:background}.
Some other methods generate simplified representations of molecules, such as 1D SMILES strings~\citep{weininger1988smiles} and 2D graphs of molecules. These include variational autoencoders~\citep{kusner2017grammar,dai2018syntax,jin2018junction,simonovsky2018graphvae,liu2018constrained}, normalizing flows~\citep{madhawa2019graphnvp,zang2020moflow,luo2021graphdf}, generative adversarial networks~\citep{bian2019deep,assouel2018defactor}, and autoregressive models~\citep{popova2019molecularrnn,flam2021keeping}.
There are also methods on generating torsion angles in molecules. For instance, Torsional Diffusion~\citep{jing2022torsional} employs the SDE formulation of diffusion models to model torsion angles in a given 2D molecular graph for the conformation generation task.

\textbf{Inverse molecular design.} Generative models have been applied to inverse molecular design. For example, conditional autoregressive models~\citep{gebauer2022inverse} and EDM~\citep{hoogeboom2022equivariant} directly generate 3D molecules with desired quantum properties. \cite{gebauer2019symmetry} also finetune pretrained generative models on a biased subset to generate 3D molecules with small HOMO-LUMO gaps. In contrast to these conditional generative models, our work further proposes a guidance method, a flexible way to control the generation process of molecules.
Some other methods apply optimization methods to search molecules with desired properties, such as reinforcement learning~\citep{zhou2019optimization,you2018graph} and genetic algorithms~\citep{jensen2019graph,nigam2019augmenting}. These optimization methods generally consider the 1D SMILES strings or 2D graphs of molecules, and the 3D information is not provided.

\section{Background}
\label{sec:background}

\textbf{3D representation of molecules.} Suppose a molecule has $M$ atoms and let $\vx^i \in \sR^n$ ($n=3$ in general) be the coordinate of the $i$th atom. The collection of coordinates $\cs = (\vx^1, \dots, \vx^M) \in \sR^{Mn}$ determines the \textit{conformation} of the molecule. In addition to the coordinate, each atom is also associated with an atom feature, e.g., the atom type. We use $\vh^i \in \sR^d$ to represent the atom feature of the $i$th atom, and use $\vh = (\vh^1,\dots,\vh^M) \in \sR^{Md}$ to represent the collection of atom features in a molecule. We use a tuple $\vz = (\vx, \vh)$ to represent a molecule, which contains both the 3D geometry information and the atom feature information. 

\textbf{Equivariance and invariance.} Suppose $\mR$ is a transformation. A distribution $p(\vx, \vh)$ is said to be \textit{invariant} to $\mR$, if $p(\vx, \vh) = p(\mR \vx, \vh)$ holds for all $\vx$ and $\vh$. Here $\mR \vx = ( \mR \vx^1, \dots, \mR \vx^M )$ is applied to each coordinate. 
A function $(\va^x, \va^h) = \vf(\vx, \vh)$ that have two components $\va^x, \va^h$ in its output is said to be \textit{equivariant} to $\mR$, if $\vf(\mR \vx, \vh) = (\mR \va^x, \va^h)$ holds for all $\vx$ and $\vh$. 
A function $\vf(\vx, \vh)$ is said to be \textit{invariant} to $\mR$, if $\vf(\mR \vx, \vh) = \vf(\vx, \vh)$ holds for all $\vx$ and $\vh$.

\textbf{Zero CoM subspace.} It has been shown that the invariance to translational and rotational transformations is an important factor for the success of 3D molecule modeling~\citep{kohler2020equivariant,xu2022geodiff}. However, the translational invariance is impossible for a distribution in the full space $\sR^{Mn}$~\citep{satorras2021n}. Nevertheless, we can view two collections of coordinates $\vx$ and $\vy$ as equivalent if $\vx$ can be translated from $\vy$, since the translation doesn't change the identity of a molecule. Such an equivalence relation partitions the whole space $\sR^{Mn}$ into disjoint equivalence classes. Indeed, all elements in the same equivalence classes represent the same conformation, and we can use the element with zero center of mass (CoM), i.e., $\frac{1}{M}\sum_{i=1}^M \vx^i = \vzero$, as the specific representation. These elements collectively form the \textit{zero CoM linear subspace} $\ZCoM$~\citep{xu2022geodiff,hoogeboom2022equivariant}, and the rest of the paper always uses elements in $\ZCoM$ to represent conformations.

\textbf{Equivariant graph neural network.} \cite{satorras2021egnn} propose \textit{equivariant graph neural networks} (EGNNs), which incorporate the equivariance inductive bias into neural networks. Specifically, $(\va^x, \va^h) = \EGNN(\vx, \vh)$ is a composition of $L$ \textit{equivariant convolutional layers}. The $l$-th layer takes the tuple $(\vx_l, \vh_l)$ as the input and outputs an updated version $(\vx_{l+1}, \vh_{l+1})$, as follows:
\begin{align*}
    & \vm^{ij} = \Phi_m (\vh^i_l, \vh^j_l, \|\vx^i_l - \vx^j_l \|_2^2, e^{ij}; \vtheta_m), \  w^{ij} = \Phi_w(\vm^{ij}; \vtheta_w), \  \vh^i_{l+1} = \Phi_h (\vh^i_l, \sum_{j \neq i} w^{ij} \vm^{ij}; \vtheta_h), \\
    & \vx^i_{l+1} = \vx^i_l + \sum_{j\neq i} \frac{\vx^i_l - \vx^j_l}{\|\vx^i_l - \vx^j_l \|_2 + 1} \Phi_x(\vh^i_l, \vh^j_l, \|\vx^i_l - \vx^j_l \|_2^2, e^{ij}; \vtheta_x),
\end{align*}
where $\Phi_m, \Phi_w, \Phi_h, \Phi_x$ are parameterized by fully connected neural networks with parameters $\vtheta_m, \vtheta_w, \vtheta_h, \vtheta_x$ respectively, and $e^{ij}$ are optional feature attributes. We can verify that these layers are equivariant to orthogonal transformations, which include rotational transformations as special cases. As their composition, the EGNN is also equivariant to orthogonal transformations. Furthermore, let $\EGNN^h(\vx, \vh) = \va^h$, i.e., the second component in the output of the EGNN. Then $\EGNN^h(\vx, \vh)$ is invariant to orthogonal transformations.

\textbf{Equivariant diffusion models (EDM)}~\citep{hoogeboom2022equivariant} are a variant of diffusion models for molecule data. EDMs gradually inject noise to the molecule $\vz = (\vx, \vh)$ via a forward process
\begin{align}
\label{eq:edm_f}
    \! q(\vz_{1:N}|\vz_0) \!=\! \prod_{n=1}^N q(\vz_n|\vz_{n-1}), \ q(\vz_n|\vz_{n-1}) = \gN_X(\vx_n|\sqrt{\alpha_n} \vx_{n-1}, \beta_n) \gN(\vh_n|\sqrt{\alpha_n} \vh_{n-1}, \beta_n),
\end{align}
where $\alpha_n$ and $\beta_n$ represent the noise schedule and satisfy $\alpha_n+\beta_n=1$, and $\gN_X$ represent the Gaussian distribution in the zero CoM subspace $X$ (see its formal definition in Appendix~\ref{sec:rsde_proof}). Let $\overline{\alpha}_n = \alpha_1\alpha_2\cdots\alpha_n$, $\overline{\beta}_n = 1-\overline{\alpha}_n$ and $\tilde{\beta}_n = \beta_n \overline{\beta}_{n-1} / \overline{\beta}_n$. To generate samples, the forward process is reversed using a Markov chain:
\begin{align}
\label{eq:edm_r}
    \! p(\vz_{0:N}) \!=\! p(\vz_N) \! \prod_{n=1}^N \! p(\vz_{n-1}|\vz_n), p(\vz_{n-1}|\vz_n) \!=\! \gN_X \! (\vx_{n-1}|\vmu_n^x(\vz_n), \tilde{\beta}_n) \gN\!(\vh_{n-1}|\vmu_n^h(\vz_n), \tilde{\beta}_n). \!
\end{align}
Here $p(\vz_N) = \gN_X(\vx_N|\vzero, 1) \gN(\vh_N|\vzero, 1)$.
The mean $\vmu_n(\vz_n) = (\vmu_n^x(\vz_n), \vmu_n^h(\vz_n))$ is parameterized by a noise prediction network $\vepsilon_\vtheta(\vz_n, n)$, and is trained using a MSE loss, as follows:
\begin{align*}
    \vmu_n(\vz_n) = \frac{1}{\sqrt{\alpha}_n}(\vz_n -  \frac{\beta_n}{\sqrt{\overline{\beta}_n}} \vepsilon_\vtheta(\vz_n, n) ),\quad \min_\vtheta \E_n \E_{q(\vz_0, \vz_n)} w(t) \| \vepsilon_\vtheta(\vz_n, n) - \vepsilon_n \|^2,
\end{align*}
where $\vepsilon_n = \frac{\vz_n - \sqrt{\overline{\alpha}_n} \vz_0}{\sqrt{\overline{\beta}_n}}$ is the standard Gaussian noise injected to $\vz_0$ and $w(t)$ is the weight term. \cite{hoogeboom2022equivariant} show that the distribution of generated samples $p(\vz_0)$ is invariant to rotational transformations if the noise prediction network is equivariant to orthogonal transformations. In Section~\ref{sec:equi_sde}, we extend this proposition to the SDE formulation of molecular diffusion modelling.

\cite{hoogeboom2022equivariant} also present a conditional version of EDM for inverse molecular design by adding an extra input of the condition $c$ to the noise prediction network as $\vepsilon_\vtheta(\vz_n, c, n)$.

\section{Equivariant Energy-Guided SDE}

In this part, we introduce our equivariant energy-guided SDE (EEGSDE), as illustrated in Figure~\ref{fig:eegsde}. EEGSDE is based on the SDE formulation of molecular diffusion modeling, which is described in Section~\ref{sec:sde_prod} and Section~\ref{sec:equi_sde}. Then, we formally present our EEGSDE that incorporates an energy function to guide the molecular generation in Section~\ref{sec:eegsde}. We provide derivations in Appendix~\ref{sec:derivations}.

\subsection{SDE in the Product Space}
\label{sec:sde_prod}

Recall that a molecule is represented as a tuple $\vz = (\vx, \vh)$, where $\vx = (\vx^1, \dots, \vx^M) \in \ZCoM$ represents the conformation and $\vh = (\vh^1, \dots, \vh^M) \in \sR^{M d}$ represents atom features. Here $\ZCoM = \{\vx \in \sR^{Mn}: \frac{1}{M} \sum_{i=1}^M \vx^i = \vzero \}$ is the zero CoM subspace mentioned in Section~\ref{sec:background}, and $d$ is the feature dimension. We first introduce a continuous-time diffusion process $\{\vz_t\}_{0\leq t \leq T}$ in the product space $\ZCoM \times \sR^{Md}$, which gradually adds noise to $\vx$ and $\vh$. This can be described by the forward SDE:
\begin{align}
\label{eq:sde}
    \md \vz = f(t) \vz \md t + g(t) \md (\vw_x, \vw_h), \quad \vz_0 \sim q(\vz_0),
\end{align}
where $f(t)$ and $g(t)$ are two scalar functions, $\vw_x$ and $\vw_h$ are independent standard Wiener processes in $\ZCoM$ and $\sR^{Md}$ respectively, and the SDE starts from the data distribution $q(\vz_0)$. Note that $\vw_x$ can be constructed by subtracting the CoM of a standard Wiener process $\vw$ in $\sR^{Mn}$, i.e., $\vw_x = \vw - \overline{\vw}$, where $\overline{\vw} = \frac{1}{M} \sum_{i=1}^M \vw^i$ is the CoM of $\vw = (\vw^1, \dots, \vw^M)$. It can be shown that the SDE has a linear Gaussian transition kernel $q(\vz_t|\vz_s) = q(\vx_t|\vx_s) q(\vh_t|\vh_s)$ from $\vx_s$ to $\vx_t$, where $0 \leq s < t \leq T$. Specifically, there exists two scalars $\alpha_{t|s}$ and $\beta_{t|s}$, s.t., $q(\vx_t|\vx_s) = \gN_X(\vx_t|\sqrt{\alpha_{t|s}} \vx_s, \beta_{t|s})$ and $q(\vh_t|\vh_s) = \gN(\vh_t|\sqrt{\alpha_{t|s}} \vh_s, \beta_{t|s})$. Here $\gN_X$ denotes the Gaussian distribution in the subspace $\ZCoM$, and see Appendix~\ref{sec:rsde_proof} for its formal definition. Indeed, the forward process of EDM in Eq.~{(\ref{eq:edm_f})} is a discretization of the forward SDE in Eq.~{(\ref{eq:sde})}.

To generate molecules, we reverse Eq.~{(\ref{eq:sde})} from $T$ to $0$. Such a time reversal forms another a SDE, which can be represented by both the \textit{score function form} and the \textit{noise prediction form}:
\begin{align}
    \md \vz = & [f(t) \vz - g(t)^2 \underbrace{(\nabla_\vx \log q_t(\vz) - \overline{\nabla_\vx \log q_t(\vz)}, \nabla_\vh \log q_t(\vz))}_{\text{\normalsize{score function form}}}] \md t + g(t) \md (\tilde{\vw}_x, \tilde{\vw}_h), \nonumber \\
    = & [f(t) \vz + \frac{g(t)^2}{\sqrt{\beta_{t|0}}} \underbrace{\E_{q(\vz_0|\vz_t)} \vepsilon_t}_{\text{\normalsize{noise prediction form}}}] \md t + g(t) \md (\tilde{\vw}_x, \tilde{\vw}_h), \quad \vz_T \sim q_T(\vz_T). \label{eq:rsde}
\end{align}
Here $q_t(\vz)$ is the marginal distribution of $\vz_t$, $\nabla_\vx \log q_t (\vz)$ is the gradient of $\log q_t (\vz)$ w.r.t. $\vx$\footnote{While $q_t (\vz)$ is defined in $\ZCoM \times \sR^{Md}$, its domain can be extended to $\sR^{Mn}\times \sR^{Md}$ and the gradient is valid. See Remark~\ref{re:marginal} in Appendix~\ref{sec:rsde_proof} for details.}, $\overline{\nabla_\vx \log q_t (\vz)} = \frac{1}{M} \sum_{i=1}^M \nabla_{\vx^i} \log q_t(\vz)$ is the CoM of $\nabla_\vx \log q_t (\vz)$, $\md t$ is the infinitesimal negative timestep, $\tilde{\vw}_x$ and $\tilde{\vw}_h$ are independent reverse-time standard Wiener processes in $\ZCoM$ and $\sR^{Md}$ respectively, and $\vepsilon_t = \frac{\vz_t - \sqrt{\alpha_{t|0}} \vz_0}{\sqrt{\beta_{t|0}}}$ is the standard Gaussian noise injected to $\vz_0$. Compared to the original SDE introduced by \cite{song2020score}, our reverse SDE in Eq.~{(\ref{eq:rsde})} additionally subtracts the CoM of $\nabla_\vx \log q_t(\vz)$. This ensures $\vx_t$ always stays in the zero CoM subspace as time flows back.

To sample from the reverse SDE in Eq.~{(\ref{eq:rsde})}, we use a noise prediction network $\vepsilon_\vtheta(\vz_t, t)$ to estimate $\E_{q(\vz_0|\vz_t)} \vepsilon_t$, through minimizing the MSE loss $\min_\vtheta \E_t \E_{q(\vz_0, \vz_t)} w(t) \|\vepsilon_\vtheta(\vz_t, t) - \vepsilon_t \|^2$, 
where $t$ is uniformly sampled from $[0, T]$, and $w(t)$ controls the weight of the loss term at time $t$. Note that the noise $\vepsilon_t$ is in the product space $\ZCoM \times \sR^{Md}$, so we subtract the CoM of the predicted noise of $\vx_t$ to ensure $\vepsilon_\vtheta(\vz_t, t)$ is also in the product space.

Substituting $\vepsilon_\vtheta(\vz_t, t)$ into Eq.~{(\ref{eq:rsde})}, we get an approximate reverse-time SDE parameterized by $\vtheta$:
\begin{align}
\label{eq:rsde_approx}
    \md \vz = [f(t) \vz + \frac{g(t)^2}{\sqrt{\beta_{t|0}}} \vepsilon_\vtheta(\vz, t)] \md t + g(t) \md (\tilde{\vw}_x, \tilde{\vw}_h), \quad \vz_T \sim p_T(\vz_T),
\end{align}
where $p_T(\vz_T) = \gN_X(\vx_T|\vzero, 1) \gN(\vh_T|\vzero, 1)$ is a Gaussian prior in the product space that approximates $q_T(\vz_T)$. We define $p_\vtheta(\vz_0)$ as the marginal distribution of Eq.~{(\ref{eq:rsde_approx})} at time $t=0$, which is the distribution of our generated samples. Similarly to the forward process, the reverse process of EDM in Eq.~{(\ref{eq:edm_r})} is a discretization of the reverse SDE in Eq.~{(\ref{eq:rsde_approx})}.

\subsection{Equivariant SDE}
\label{sec:equi_sde}

To leverage the geometric symmetry in 3D molecular conformation, $p_\vtheta(\vz_0)$ should be invariant to translational and rotational transformations. As mentioned in Section~\ref{sec:background}, the translational invariance of $p_\vtheta(\vz_0)$ is already satisfied by considering the zero CoM subspace. The rotational invariance can be satisfied if the noise prediction network is equivariant to orthogonal transformations, as summarized in the following theorem:
\begin{restatable}{theorem}{equisde}
\label{thm:equisde}
Let $(\vepsilon_\vtheta^x(\vz_t, t), \vepsilon_\vtheta^h(\vz_t, t)) = \vepsilon_\vtheta(\vz_t, t)$, where $\vepsilon_\vtheta^x(\vz_t, t)$ and $\vepsilon_\vtheta^h(\vz_t, t)$ are the predicted noise of $\vx_t$ and $\vh_t$ respectively.
If for any orthogonal transformation $\mR \in \sR^{n\times n}$, $\vepsilon_\vtheta(\vz_t, t)$ is equivariant to $\mR$, i.e., $\vepsilon_\vtheta(\mR \vx_t, \vh_t, t) = (\mR \vepsilon_\vtheta^x(\vx_t, \vh_t, t), \vepsilon_\vtheta^h(\vx_t, \vh_t, t))$, and $p_T(\vz_T)$ is invariant to $\mR$, i.e., $p_T(\mR \vx_T, \vh_T) = p_T(\vx_T, \vh_T)$, then $p_\vtheta(\vz_0)$ is invariant to any rotational transformation.
\end{restatable}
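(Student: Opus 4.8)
The plan is to show that the law of the entire reverse-time process defined by Eq.~(\ref{eq:rsde_approx}) is invariant under the orthogonal action $(\vx, \vh) \mapsto (\mR\vx, \vh)$, from which invariance of its time-zero marginal $p_\vtheta(\vz_0)$ follows immediately. The strategy is a change-of-variables (pathwise) argument on the SDE: I would show that rotating any solution produces a solution of the \emph{same} SDE, driven by an identically distributed noise but started from the rotated terminal condition, and then invoke the invariance of the prior $p_T$ to identify the two laws.

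First I would record the two structural facts about the action of $\mR$. Since $\mR$ acts coordinatewise as $\vx = (\vx^1,\dots,\vx^M) \mapsto (\mR\vx^1,\dots,\mR\vx^M)$ and $\frac{1}{M}\sum_{i} \mR\vx^i = \mR(\frac{1}{M}\sum_{i} \vx^i)$, the subspace $\ZCoM$ is preserved, and $\mR$ restricted to $\ZCoM$ is again orthogonal. Consequently the pushforward of a standard Wiener process $\tilde{\vw}_x$ in $\ZCoM$ under $\mR$ is again a standard Wiener process in $\ZCoM$; together with the independent $\tilde{\vw}_h$, this shows that $(\mR\tilde{\vw}_x, \tilde{\vw}_h)$ has the same law as $(\tilde{\vw}_x, \tilde{\vw}_h)$.

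Next I would check that the drift of Eq.~(\ref{eq:rsde_approx}) intertwines with the action. Writing the drift as $\vb(\vz, t) = f(t)\vz + \frac{g(t)^2}{\sqrt{\beta_{t|0}}}\vepsilon_\vtheta(\vz, t)$, the scalar factor $f(t)$ commutes with $\mR$, and by the assumed equivariance $\vepsilon_\vtheta(\mR\vx, \vh, t) = (\mR\vepsilon_\vtheta^x(\vx, \vh, t), \vepsilon_\vtheta^h(\vx, \vh, t))$ the $\vx$-component of the drift satisfies $\vb^x(\mR\vx, \vh, t) = \mR\,\vb^x(\vx, \vh, t)$, while the $\vh$-component satisfies $\vb^h(\mR\vx, \vh, t) = \vb^h(\vx, \vh, t)$. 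Applying the linear map $(\vx, \vh)\mapsto(\mR\vx, \vh)$ to a solution $\{\vz_t\}$ (an elementary instance of It\^o's formula, since the map is linear and carries no second-order correction) and substituting these identities shows that $\{(\mR\vx_t, \vh_t)\}$ solves Eq.~(\ref{eq:rsde_approx}) with the same coefficients, driven by the identically distributed noise $(\mR\tilde{\vw}_x, \tilde{\vw}_h)$ and started from $(\mR\vx_T, \vh_T)$.

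Finally, because $p_T$ is invariant to $\mR$, the terminal conditions $\vz_T$ and $(\mR\vx_T, \vh_T)$ share the law $p_T$; since the two processes also solve the same SDE with equally distributed driving noise, their laws coincide at every time, and in particular $p_\vtheta(\mR\vx_0, \vh_0) = p_\vtheta(\vx_0, \vh_0)$, which is the claimed rotational invariance. The main obstacle is the rigorous justification of this last identification --- that rotating the driving Wiener process yields an identically distributed solution --- which is precisely where orthogonality of $\mR$ on $\ZCoM$ and weak uniqueness of the SDE are used. An alternative that sidesteps pathwise uniqueness is to argue at the level of marginals, verifying that the $\mR$-pushforward of $p_t$ satisfies the same Fokker--Planck equation (the generator commutes with the $\mR$-action by the same drift and diffusion identities) with the same invariant terminal condition, so the two agree for all $t$.
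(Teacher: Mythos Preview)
Your proposal is correct and follows essentially the same route as the paper: apply the map $(\vx,\vh)\mapsto(\mR\vx,\vh)$ to the reverse SDE, use equivariance of $\vepsilon_\vtheta$ to see that the drift transforms correctly, use orthogonality to see that the driving noise has the same law, and invoke invariance of $p_T$ to identify the terminal laws, whence equality of the time-zero marginals. Your write-up is in fact a bit more careful than the paper's (you explicitly check that $\mR$ preserves $\ZCoM$ and flag the role of weak uniqueness, and you offer the Fokker--Planck alternative), but the argument is the same.
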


As mentioned in Section~\ref{sec:background}, the EGNN satisfies the equivariance constraint, and we parameterize $\vepsilon_\vtheta(\vz_t, t)$ using an EGNN following \cite{hoogeboom2022equivariant}. See details in Appendix~\ref{sec:npn}.

\subsection{Equivariant Energy-Guided SDE}
\label{sec:eegsde}

Now we describe \textit{equivariant energy-guided SDE} (EEGSDE), which guides the generated molecules of Eq.~{(\ref{eq:rsde_approx})} towards desired properties $c$ by leveraging a time-dependent energy function $E(\vz, c, t)$:
\begin{align}
\label{eq:eegsde}
    \md \vz = [& f(t) \vz + g(t)^2 (\frac{1}{\sqrt{\beta_{t|0}}} \vepsilon_\vtheta(\vz, t) \nonumber \\
    & \! + \underbrace{(\nabla_\vx E(\vz, c, t) -  \overline{\nabla_\vx E(\vz, c, t)}, \nabla_\vh E(\vz, c, t))}_{\text{\normalsize{energy gradient taken in the product space}}} )] \md t + g(t) \md (\tilde{\vw}_x, \tilde{\vw}_h), \ \vz_T \sim p_T(\vz_T),
\end{align}
which defines a distribution $p_\vtheta(\vz_0|c)$ conditioned on the property $c$.
Here the CoM $\overline{\nabla_\vx E(\vz, c, t)}$ of the gradient is subtracted to keep the SDE in the product space, which ensures the translational invariance of $p_\vtheta(\vz_0|c)$. Besides, the rotational invariance is satisfied by using energy invariant to orthogonal transformations, as summarized in the following theorem:
\begin{restatable}{theorem}{equiegsde}
\label{thm:equiegsde}
Suppose the assumptions in Theorem~\ref{thm:equisde} hold and $E(\vz, c, t)$ is invariant to any orthogonal transformation $\mR$, i.e., $E(\mR \vx, \vh, c, t) = E(\vx, \vh, c, t)$. Then $p_\vtheta(\vz_0|c)$ is invariant to any rotational transformation.
\end{restatable}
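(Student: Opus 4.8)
The plan is to reduce Theorem~\ref{thm:equiegsde} to Theorem~\ref{thm:equisde} by observing that the EEGSDE in Eq.~(\ref{eq:eegsde}) has exactly the same form as the approximate reverse SDE in Eq.~(\ref{eq:rsde_approx}), only with the noise prediction network replaced by an \emph{effective} one. Concretely, I would define
\[
\tilde\vepsilon_\vtheta(\vz, t) = \vepsilon_\vtheta(\vz, t) + \sqrt{\beta_{t|0}}\,\big(\nabla_\vx E(\vz, c, t) - \overline{\nabla_\vx E(\vz, c, t)},\ \nabla_\vh E(\vz, c, t)\big),
\]
and factor $g(t)^2/\sqrt{\beta_{t|0}}$ out of the drift of Eq.~(\ref{eq:eegsde}); this shows that the EEGSDE is precisely Eq.~(\ref{eq:rsde_approx}) with $\vepsilon_\vtheta$ replaced by $\tilde\vepsilon_\vtheta$ and with the same prior $p_T$. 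Since the assumptions of Theorem~\ref{thm:equisde} are assumed to hold, $p_T$ is already invariant to $\mR$; hence it suffices to verify that $\tilde\vepsilon_\vtheta$ is equivariant to every orthogonal $\mR$, after which Theorem~\ref{thm:equisde} applies verbatim and yields the rotational invariance of $p_\vtheta(\vz_0|c)$.

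Because $\vepsilon_\vtheta$ is equivariant by assumption, equivariance of $\tilde\vepsilon_\vtheta$ reduces to equivariance of the (CoM-subtracted) energy-gradient term, and the crux is the fact that the gradient of an orthogonally invariant scalar field is equivariant. I would differentiate the invariance identity $E(\mR\vx, \vh, c, t) = E(\vx, \vh, c, t)$: with respect to $\vh$ this gives $\nabla_\vh E(\mR\vx, \vh, c, t) = \nabla_\vh E(\vx, \vh, c, t)$, so the $\vh$-component is invariant; with respect to each coordinate block $\vx^i$, the chain rule gives $\mR^\top (\nabla_{\vx^i} E)(\mR\vx, \vh, c, t) = (\nabla_{\vx^i} E)(\vx, \vh, c, t)$, and using $\mR^\top = \mR^{-1}$ this rearranges to $(\nabla_{\vx^i} E)(\mR\vx, \vh, c, t) = \mR\,(\nabla_{\vx^i} E)(\vx, \vh, c, t)$, i.e.\ the $\vx$-gradient is equivariant. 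Finally, since $\mR$ acts identically on every coordinate, the CoM of an equivariant field satisfies $\overline{\mR\,\va^x} = \mR\,\overline{\va^x}$, so $\mR$ commutes with CoM subtraction and $\nabla_\vx E - \overline{\nabla_\vx E}$ remains equivariant. Combining these with the assumed equivariance of $\vepsilon_\vtheta$ shows $\tilde\vepsilon_\vtheta$ is equivariant, completing the reduction.

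The main obstacle is not any of the calculations above, which are short, but rather the domain subtlety that $E$ and $q_t$ are a priori defined only on the zero-CoM subspace $\ZCoM \times \sR^{Md}$, so that the gradients $\nabla_\vx E$ appearing in the drift must be interpreted through the same domain extension already invoked for $\nabla_\vx \log q_t$ in the footnote and Remark~\ref{re:marginal}. Once that identification is fixed (and one checks the CoM subtraction keeps the extended gradient tangent to $\ZCoM$), the chain-rule equivariance argument is the entire content of the proof, and the invariance of $p_\vtheta(\vz_0|c)$ follows directly from Theorem~\ref{thm:equisde}.
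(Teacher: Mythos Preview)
Your proposal is correct and follows essentially the same route as the paper: define an effective noise predictor $\tilde\vepsilon_\vtheta = \vepsilon_\vtheta + \sqrt{\beta_{t|0}}\,(\nabla_\vx E - \overline{\nabla_\vx E},\,\nabla_\vh E)$, verify its equivariance by differentiating the invariance identity for $E$ and checking that CoM subtraction commutes with $\mR$, and then invoke Theorem~\ref{thm:equisde}. Your added remark about the domain extension of $\nabla_\vx E$ on $\ZCoM$ is a reasonable caveat, though the paper's proof does not dwell on it.
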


Note that we can also use a conditional model $\vepsilon_\vtheta(\vz, c, t)$ in Eq.~{(\ref{eq:eegsde})}. See Appendix~\ref{sec:cnpn} for details.
To sample from $p_\vtheta(\vz_0|c)$, various solvers can be used for Eq.~{(\ref{eq:eegsde})}, such as the Euler-Maruyama method~\citep{song2020score} and the Analytic-DPM sampler~\citep{bao2022analytic,bao2022estimating}. We present the Euler-Maruyama method as an example in Algorithm~\ref{alg:sample} at Appendix~\ref{sec:sample}.

\subsection{How to Design the Energy Function}

Our EEGSDE is a general framework, which can be applied to various applications by specifying different energy functions $E(\vz, c, t)$. For example, we can design the energy function according to consistency between the molecule $\vz$ and the property $c$, where a low energy represents a well consistency. As the generation process in Eq.~{(\ref{eq:eegsde})} proceeds, the gradient of the energy function encourages generated molecules to have a low energy, and consequently a well consistency. Thus, we can expect the generated molecule $\vz$ aligns well with the property $c$.
In the rest of the paper, we specify the choice of energy functions, and show these energies improve controllable molecule generation targeted to quantum properties, molecular structures, and even a combination of them.

\textbf{Remark.} The term ``energy'' in this paper refers to a general notion in statistical machine learning, which is a scalar function that captures dependencies between input variables~\citep{lecun2006tutorial}. Thus, the ``energy'' in this paper can be set to a MSE loss when we want to capture how the molecule align with the property (as 
done in Section~\ref{sec:q}). Also, the ``energy'' in this paper does not exclude potential energy or free energy in chemistry, and they might be applicable when we want to generate molecules with small potential energy or free energy.

\vspace{-.1cm}
\section{Generating Molecules with Desired Quantum Properties}\vspace{-.1cm}
\label{sec:q}

\begin{figure}[t]
\begin{minipage}[t]{\textwidth}
\captionof{table}{How generated molecules align with the target quantum property. The L-bound~\citep{hoogeboom2022equivariant} represents the loss of $\phi_p$ on $D_b$ and can be viewed as a lower bound of the MAE metric. The conditional EDM results are reproduced, and are consistent with \cite{hoogeboom2022equivariant} (see Appendix~\ref{sec:reproduce}). ``\#Atoms'' uses public results from \cite{hoogeboom2022equivariant}.} \vspace{-.3cm}
\label{tab:variant}
\begin{center}
\scalebox{0.8}{
\begin{tabular}{llllll} 
\toprule
Method & MAE$\downarrow$ & Method & MAE$\downarrow$ & Method & MAE$\downarrow$ \\
\cmidrule(lr){1-2} \cmidrule(lr){3-4} \cmidrule(lr){5-6}
\multicolumn{2}{c}{$C_v$ (\ucv)} &   \multicolumn{2}{c}{$\mu$ (D)} & \multicolumn{2}{c}{$\alpha$ (\ualpha)}\\ 
\cmidrule(lr){1-2} \cmidrule(lr){3-4} \cmidrule(lr){5-6}
U-bound & 6.879$\pm$0.015 & U-bound & 1.613$\pm$0.003 & U-bound & 8.98$\pm$0.02\\
\#Atoms & 1.971 & \#Atoms & 1.053 & \#Atoms & 3.86\\
Conditional EDM & 1.065$\pm$0.010 & Conditional EDM &1.123$\pm$0.013 & Conditional EDM & 2.78$\pm$0.04 \\
EEGSDE ($s$=1) & 1.037$\pm$0.010 &EEGSDE ($s$=0.5) & 0.930$\pm$0.005 & EEGSDE ($s$=0.5) &2.67$\pm$0.04\\
EEGSDE ($s$=5) & 0.981$\pm$0.002 & EEGSDE ($s$=1) & 0.858$\pm$0.006 & EEGSDE ($s$=1) & 2.62$\pm$0.03 \\
EEGSDE ($s$=10) & \textbf{0.941}$\pm$0.005 & EEGSDE ($s$=2)  & \textbf{0.777}$\pm$0.007 & EEGSDE ($s$=3) & \textbf{2.50}$\pm$0.02\\
L-bound & 0.040 & L-bound & 0.043 & L-bound & 0.09 \\
\cmidrule(lr){1-2} \cmidrule(lr){3-4} \cmidrule(lr){5-6}
\multicolumn{2}{c}{$\Delta \varepsilon$ (meV)} & \multicolumn{2}{c}{$\varepsilon_{\HOMO}$ (meV)} & \multicolumn{2}{c}{$\varepsilon_{\LUMO}$ (meV)} \\ 
\cmidrule(lr){1-2} \cmidrule(lr){3-4} \cmidrule(lr){1-2} \cmidrule(lr){5-6}
U-bound & 1464$\pm$4 & U-bound & 645$\pm$41 & U-bound & 1457$\pm$5 \\
\#Atoms & 866 & \#Atoms & 426 & \#Atoms & 813 \\
Conditional EDM & 671$\pm$5 & Conditional EDM & 371$\pm$2 & Conditional EDM & 601$\pm$7 \\
EEGSDE ($s$=0.5) & 574$\pm$4 & EEGSDE ($s$=0.1) & 357$\pm$4 & EEGSDE ($s$=0.5) & 525$\pm$4 \\
EEGSDE ($s$=1) & 542$\pm$2 & EEGSDE ($s$=0.5) & 320$\pm$1 & EEGSDE ($s$=1) & 496$\pm$2 \\
EEGSDE ($s$=3) & \textbf{487}$\pm$3 & EEGSDE ($s$=1) & \textbf{302}$\pm$2 & EEGSDE ($s$=3) & \textbf{447}$\pm$6\\
L-bound & 65 & L-bound & 39 & L-bound & 36 \\
\bottomrule
\end{tabular}
}
\end{center}
\end{minipage}\vspace{.3cm} \\ 
\begin{minipage}[t]{\textwidth}
\captionof{table}{The mean absolute error (MAE) computed by the Gaussian software instead of $\phi_p$.}\vspace{-.2cm}
\label{tab:gauss}
\centering
\scalebox{0.69}{\setlength{\tabcolsep}{1.3mm}{\begin{tabular}{llllllllll}
\toprule
Method & MAE$\downarrow$ & Method & MAE$\downarrow$ & Method & MAE$\downarrow$ & Method & MAE$\downarrow$ & Method & MAE$\downarrow$ \\
\cmidrule(lr){1-2} \cmidrule(lr){3-4} \cmidrule(lr){5-6} \cmidrule(lr){7-8} \cmidrule(lr){9-10}
 \multicolumn{2}{c}{$\mu$ (D)} & \multicolumn{2}{c}{$\alpha$ (\ualpha)} & \multicolumn{2}{c}{$\Delta \varepsilon$ (meV)} & \multicolumn{2}{c}{$\varepsilon_{\HOMO}$ (meV)} & \multicolumn{2}{c}{$\varepsilon_{\LUMO}$ (meV)} \\ 
\cmidrule(lr){1-2} \cmidrule(lr){3-4} \cmidrule(lr){1-2} \cmidrule(lr){5-6} \cmidrule(lr){7-8} \cmidrule(lr){9-10}
Conditional EDM &1.20 & Conditional EDM & 2.41 & Conditional EDM & 775 & Conditional EDM & 354 & Conditional EDM & 573 \\
EEGSDE ($s$=0.5) & 0.96 & EEGSDE ($s$=0.5) &2.27 & EEGSDE ($s$=0.5) & 638 & EEGSDE ($s$=0.1) & 349 & EEGSDE ($s$=0.5) & 495 \\
EEGSDE ($s$=1) & 0.78 & EEGSDE ($s$=1) & 2.03 & EEGSDE ($s$=1) & 555 & EEGSDE ($s$=0.5) & 341 & EEGSDE ($s$=1) & 445 \\
EEGSDE ($s$=2)  & \textbf{0.73} & EEGSDE ($s$=3) & \textbf{1.85 } & EEGSDE ($s$=3) & \textbf{532} & EEGSDE ($s$=1) & \textbf{284} & EEGSDE ($s$=3) & \textbf{416}\\
\bottomrule
\end{tabular}}}
\end{minipage}\vspace{.3cm} \\ 
\begin{minipage}[t]{0.53\textwidth}
\captionof{table}{How generated molecules align with multiple target quantum properties.}
\vspace{-.3cm}
\label{tab:multi}
\begin{center}
\scalebox{0.8}{
\begin{tabular}{llllll} 
\toprule
Method & MAE1$\downarrow$ & MAE2$\downarrow$ \\
\midrule
\multicolumn{3}{c}{$C_v$ (\ucv), \ $\mu$ (D)}   \\  
\midrule
Conditional EDM & 1.079$\pm$0.007  &1.156$\pm$0.011 \\
EEGSDE ($s_1$=10, $s_2$=1) & \textbf{0.981}$\pm$0.008 & \textbf{0.912}$\pm$0.006 \\
\midrule
\multicolumn{3}{c}{$\Delta \varepsilon$ (meV), \ $\mu$ (D)}   \\  
\midrule
Conditional EDM & 683$\pm$1 & 1.130$\pm$0.007 \\
EEGSDE ($s_1$=$s_2$=1) & \textbf{563}$\pm$3 & \textbf{0.866}$\pm$0.003 \\
\midrule
\multicolumn{3}{c}{$\alpha$ (\ualpha), \ $\mu$ (D)}   \\  
\midrule
Conditional EDM & 2.76$\pm$0.01 & 1.158$\pm$0.002 \\
EEGSDE ($s_1$=$s_2$=1.5) & \textbf{2.61}$\pm$0.01 & \textbf{0.855}$\pm$0.007 \\
\bottomrule
\end{tabular}}
\end{center}
\end{minipage}\hspace{0.5cm}
\begin{minipage}[t]{0.42\textwidth}
\captionof{table}{How generated molecules align with target structures.}
\vspace{-.3cm}
\label{tab:fingerprint}
\begin{center}
\scalebox{0.8}{
\begin{tabular}{ m{3.5cm}  m{2cm} } 
\toprule
Method & Similarity$\uparrow$ \\
\midrule
\multicolumn{2}{c}{QM9}  \\
\midrule
cG-SchNet & 0.499$\pm$0.002\\
Conditional EDM & 0.671$\pm$0.004 \\
EEGSDE ($s$=0.1) & 0.696$\pm$0.002 \\
EEGSDE ($s$=0.5) & 0.736$\pm$0.002 \\
EEGSDE ($s$=1.0) & \textbf{0.750}$\pm$0.003 \\
\midrule
\multicolumn{2}{c}{GEOM-Drug} \\
\midrule
Conditional EDM & 0.165$\pm$0.001 \\
EEGSDE ($s$=0.5) & 0.185$\pm$0.001 \\
EEGSDE ($s$=1.0) & \textbf{0.193}$\pm$0.001 \\
\bottomrule
\end{tabular}}
\end{center}    
\end{minipage}\vspace{-.4cm}
\end{figure}

Let $c \in \sR$ be a certain quantum property. To generate molecules with the desired property, we set the energy function as the squared error between the predicted property and the desired property $E(\vz_t, c, t) = s | g(\vz_t, t) - c |^2$,
where $g(\vz_t, t)$ is a time-dependent property prediction model, and $s$ is the scaling factor controlling the strength of the guidance. Specifically, $g(\vz_t, t)$ can be parameterized by equivariant models such as EGNN~\citep{satorras2021egnn}, SE3-Transformer~\citep{fuchs2020se} and DimeNet~\citep{klicpera2020fast} to ensure the invariance of $E(\vz_t, c, t)$, as long as they perform well in the task of property prediction. In this paper we consider EGNN. We provide details on parameterization and the training objective in Appendix~\ref{sec:energy}.
We can also generate molecules targeted to multiple quantum properties by combining energy functions linearly (see details in Appendix~\ref{sec:multi}).

\vspace{-.1cm}
\subsection{Setup}\vspace{-.1cm}
\label{sec:setup_q}

We evaluate on QM9~\citep{ramakrishnan2014quantum}, which contains quantum properties and coordinates of $\sim$130k molecules with up to nine heavy atoms from (C, N, O, F). Following EDM, we split QM9 into training, validation and test sets, which include 100K, 18K and 13K samples respectively. The training set is further divided into two non-overlapping halves $D_a,D_b$ equally. The noise prediction network and the time-dependent property prediction model of the energy function are trained on $D_b$ separately.
By default, EEGSDE uses a conditional noise prediction network $\vepsilon_\vtheta(\vz, c, t)$ in Eq.~{(\ref{eq:eegsde})}, since we find it generate more accurate molecules than using an unconditional one (see Appendix~\ref{sec:ablation_np} for an ablation study). See more details in Appendix~\ref{sec:detail_quant}. 

\textbf{Evaluation metric:} Following EDM, we use the mean absolute error (MAE) to evaluate how generated molecules align with the desired property. Specifically, we train another property prediction model $\phi_p$~\citep{satorras2021egnn} on $D_a$, and the MAE is calculated as $\frac{1}{K} \sum_{i=1}^K |\phi_p(\vz_i) - c_i|$, where $\vz_i$ is a generated molecule, $c_i$ is its desired property. We generate $K$=10,000 samples for evaluation with $\phi_p$. For fairness, the property prediction model $\phi_p$ is different from $g(\vz_t, t)$ used in the energy function, and they are trained on the two non-overlapping training subsets $D_a, D_b$ respectively. This ensures no information leak occurs when evaluating our EEGSDE. 
To further verify the effectiveness of EEGSDE, we also calculate the MAE without relying on a neural network. Specifically, we use the Gaussian software (which calculates properties according to theories of quantum chemistry without using neural networks) to calculate the properties of 100 generated molecules.
For completeness, we also report the novelty, the atom stability and the molecule stability following~\cite{hoogeboom2022equivariant} in Appendix~\ref{sec:more_metrics}, although they are not our main focus.

\textbf{Baseline:} The most direct baseline is conditional EDM, which only adopts a conditional noise prediction network. We also compare two additional baselines ``U-bound'' and ``\#Atoms'' from \cite{hoogeboom2022equivariant} (see Appendix~\ref{sec:baseline} for details).

\vspace{-.1cm}
\subsection{Results}\vspace{-.1cm}
\label{sec:results_q}

Following \cite{hoogeboom2022equivariant}, we consider six quantum properties in QM9: polarizability $\alpha$, highest occupied molecular orbital energy $\varepsilon_\HOMO$, lowest unoccupied molecular orbital energy $\varepsilon_\LUMO$, HOMO-LUMO gap $\Delta \varepsilon$, dipole moment $\mu$ and heat capacity $C_v$. Firstly, we generate molecules targeted to one of these six properties. As shown in Table~\ref{tab:variant}, with the energy guidance, our EEGSDE has a significantly better MAE than the conditional EDM on all properties. Remarkably, with a proper scaling factor $s$, the MAE of EEGSDE is reduced by more than 25\% compared to conditional EDM on properties $\Delta \varepsilon$, $\varepsilon_\LUMO$, and more than 30\% on $\mu$. 
What's more, as shown in Table \ref{tab:gauss}, our EEGSDE still has better MAE under the evaluation by the Gaussian software, which further verifies the effectiveness of our EEGSDE.
We further generate molecules targeted to multiple quantum properties by combining energy functions linearly. As shown in Table~\ref{tab:multi}, our EEGSDE still has a significantly better MAE than the conditional EDM. 


\textbf{Conclusion:} These results suggest that molecules generated by our EEGSDE align better with the desired properties than molecules generated by the conditional EDM baseline (for both the single-property and multiple-properties cases). As a consequence, EEGSDE is able to explore the chemical space in a guided way to generate promising molecules for downstream applications such as the virtual screening, which may benefit drug and material discovery.

\vspace{-.1cm}
\section{Generating Molecules with Target Structures}\vspace{-.2cm}
\label{sec:fp}

\begin{figure}[t]
    \centering
    \includegraphics[width=0.9\linewidth]{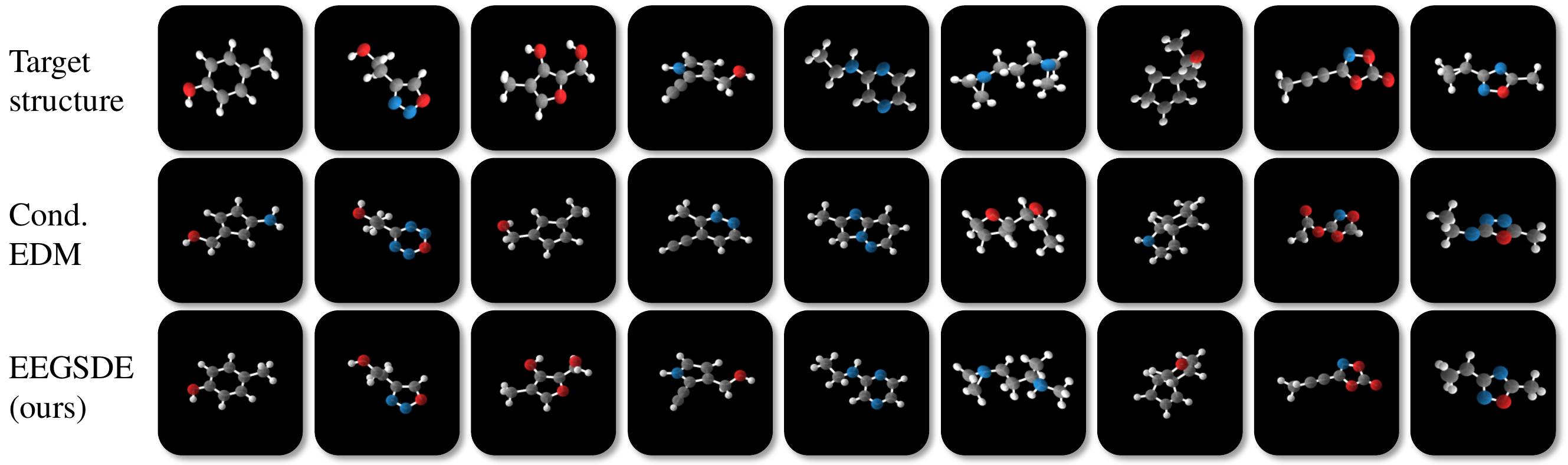}
    \vspace{-.3cm}
    \caption{Generated molecules on QM9 targeted to specific structures (unseen during training). The molecular structures of EEGSDE align better with target structures then conditional EDM.}\vspace{-.3cm}
    \label{fig:fp1}
\end{figure}

Following \cite{gebauer2022inverse}, we use the molecular fingerprint to encode the structure information of a molecule.
The molecular fingerprint $c=(c_1,\dots, c_L)$ is a series of bits that capture the presence or absence of substructures in the molecule. Specifically, a substructure is mapped to a specific position $l$ in the bitmap, and the corresponding bit $c_l$ will be 1 if the substructure exists in the molecule and will be 0 otherwise. 
To generate molecules with a specific structure (encoded by the fingerprint $c$), we set the energy function as the squared error $E(\vz_t, c, t) = s \| m(\vz_t, t) - c \|^2$ between a time-dependent multi-label classifier $m(\vz_t, t)$ and $c$. Here $s$ is the scaling factor, and $m(\vz_t, t)$ is trained with binary cross entropy loss to predict the fingerprint as detailed in Appendix~\ref{sec:energy_train}. Note that the choice of the energy function is flexible and can be different to the training loss of $m(\vz_t, t)$.
In initial experiments, we also try binary cross entropy loss for the energy function, but we find it causes the generation process unstable. The multi-label classifier $m(\vz_t, t)$ is parameterized in a similar way to the property prediction model $g(\vz_t, t)$ in Section~\ref{sec:q}, and we present details in Appendix~\ref{sec:param}.


\vspace{-.1cm}
\subsection{Setup}\vspace{-.2cm}

We evaluate on QM9 and GEOM-Drug. We train our method (including the noise prediction network and the multi-label classifier) on the whole training set. By default, we use a conditional noise prediction network $\vepsilon_\vtheta(\vz, c, t)$ in Eq.~{(\ref{eq:eegsde})} for a better performance. See more details in Appendix~\ref{sec:detail_fp}.

\textbf{Evaluation metric:} To measure how the structure of a generated molecule aligns with the target one, we use the Tanimoto similarity~\citep{gebauer2022inverse}, which captures similarity between structures by comparing their fingerprints. 

\textbf{Baseline:} The most direct baseline is conditional EDM~\citep{hoogeboom2022equivariant}, which only adopts a conditional noise prediction network without the guidance of an energy model. We also consider cG-SchNet~\citep{gebauer2022inverse}, which generates molecules in an autoregressive manner.

\vspace{-.1cm}
\subsection{Results}\vspace{-.2cm}
As shown in Table~\ref{tab:fingerprint}, EEGSDE significantly improves the similarity between target structures and generated structures compared to conditional EDM and cG-SchNet on QM9. Also note in a proper range, a larger scaling factor results in a better similarity, and EEGSDE with $s$=1 improves the similarity by more than 10\% compared to conditional EDM. In Figure~\ref{fig:fp1}, we plot generated molecules of conditional EDM and EEGSDE ($s$=1) targeted to specific structures, where our EEGSDE aligns better with them. We further visualize the effect of the scaling factor in Appendix~\ref{sec:sf}, where the generated structures align better as the scaling factor grows. These results demonstrate that our EEGSDE captures the structure information in molecules well.

We also perform experiments on the more challenging GEOM-Drug~\citep{axelrod2022geom} dataset, and we train the conditional EDM baseline following the default setting of \cite{hoogeboom2022equivariant}. As shown in Table~\ref{tab:fingerprint}, we find the conditional EDM baseline has a similarity of 0.165, which is much lower than the value on QM9. We hypothesize this is because molecules in GEOM-Drug has much more atoms than QM9 with a more complex structure, and the default setting in \cite{hoogeboom2022equivariant} is suboptimal. For example, the conditional EDM on GEOM-Drug has a smaller number of parameters than the conditional EDM on QM9 (15M v.s. 26M), which is insufficient to capture the structure information. Nevertheless, our EEGSDE still improves the similarity by $\sim$\%17. We provide generated molecules on GEOM-Drug in Appendix~\ref{sec:drug}.

\begin{figure}[t]
    \centering
    \includegraphics[width=0.75\linewidth]{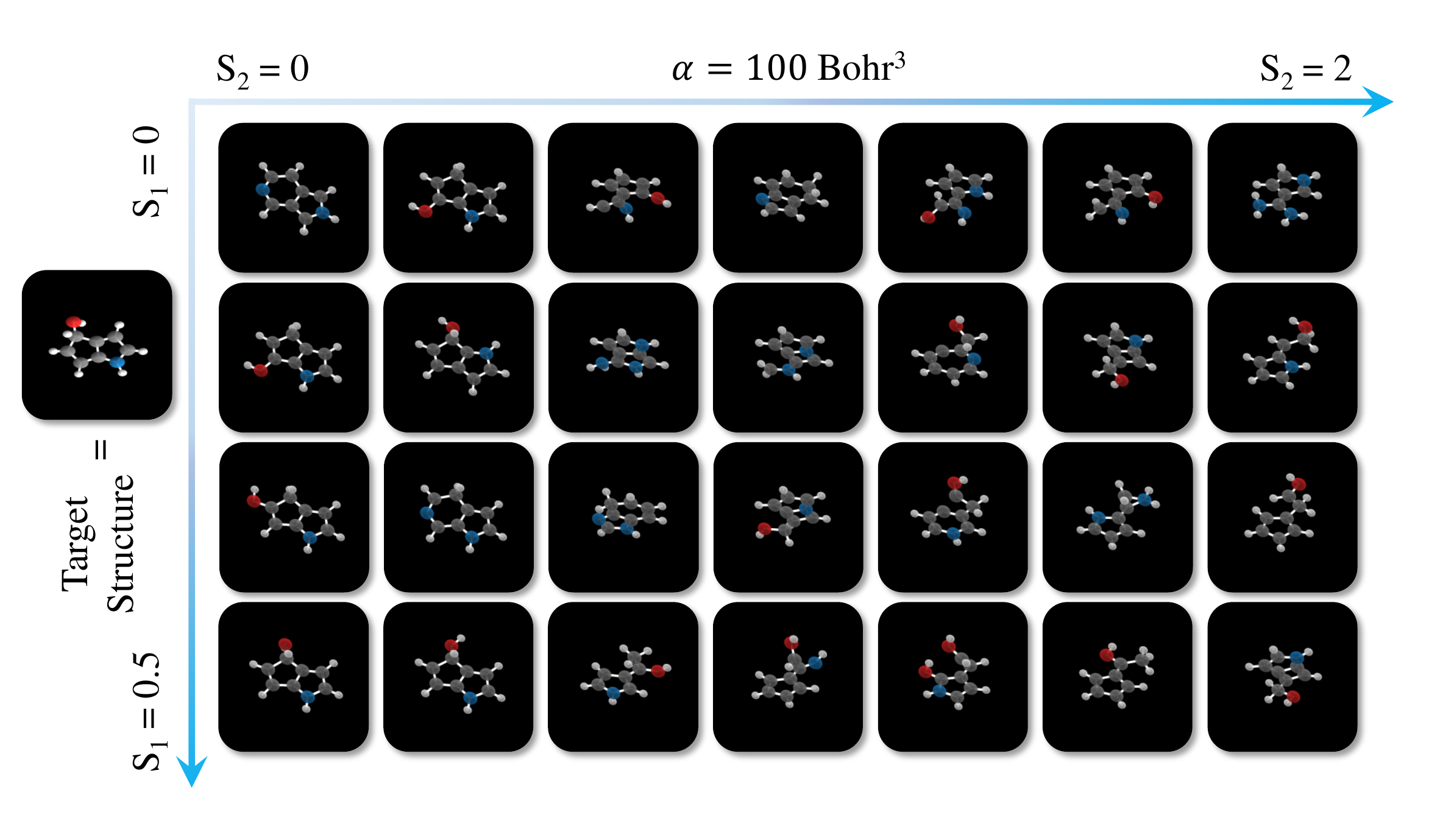}
    \vspace{-.6cm}
    \caption{Generate molecules on QM9 targeted to both the quantum property $\alpha$ and the molecular structure. As the scaling factor $s_2$ grows, the substructure of generated molecule gradually change from the symmetric ring to a less isometrically shaped structure. Meanwhile the generated molecule aligns better with the target structure as the scaling factor $s_1$ grows.}
    \label{fig:2d}
    \vspace{-.3cm}
\end{figure}

Finally, we demonstrate that our EEGSDE is a flexible framework to generate molecules targeted to multiple properties, which is often the practical case. We additionally target to the quantum property $\alpha$ (polarizability) on QM9 by combining the energy function for structures in this section and the energy function for quantum properties in Section~\ref{sec:q}. Here we choose $\alpha=100$ \ualpha, which is a relatively large value, and we expect it to encourage less isometrically shaped structures. As shown in Figure~\ref{fig:2d}, the generated molecule aligns better with the target structure as the scaling factor $s_1$ grows, and meanwhile a ring substructure in the generated molecule vanishes as the scaling factor for polarizability $s_2$ grows, leading to a less isometrically shaped structure, which is as expected.

\textbf{Conclusion:} These results suggest that molecules generated by our EEGSDE align better with the target structure than molecules generated by the conditional EDM baseline. Besides, EEGSDE can generate molecules targeted to both specific structures and desired quantum properties by combining energy functions linearly. As a result, EEGSDE may benefit practical cases in molecular design when multiple properties should be considered at the same time.

\vspace{-.1cm}
\section{Conclusion}
\vspace{-.1cm}
This work presents equivariant energy-guided SDE (EEGSDE), a flexible framework for controllable 3D molecule generation under the guidance of an energy function in diffusion models. EEGSDE naturally exploits the geometric symmetry in 3D molecular conformation, as long as the energy function is invariant to orthogonal transformations. EEGSDE significantly improves the conditional EDM baseline in inverse molecular design targeted to quantum properties and molecular structures. Furthermore, EEGSDE is able to generate molecules with multiple target properties by combining the corresponding energy functions linearly.

\section*{Acknowledgments}

We thank Han Guo and Zhen Jia for their help with their expertise in chemistry. This work was supported by NSF of China Projects (Nos. 62061136001, 61620106010, 62076145, U19B2034, U1811461, U19A2081, 6197222);  Beijing Outstanding Young Scientist Program NO. BJJWZYJH012019100020098; a grant from Tsinghua Institute for Guo Qiang; the High Performance Computing Center, Tsinghua University; the Fundamental Research Funds for the Central Universities, and the Research Funds of Renmin University of China (22XNKJ13). J.Z was also supported by the XPlorer Prize. C. Li was also sponsored by Beijing Nova Program.

\section*{Ethics Statement}

Inverse molecular design is critical in fields like material science and drug discovery. Our EEGSDE is a flexible framework for inverse molecular design and thus might benefit these fields. Currently the negative consequences are not obvious.

\section*{Reproducibility Statement}
Our code is included in the supplementary material. The implementation of our experiment is described in Section~\ref{sec:q} and Section~\ref{sec:fp}. Further details such as the hyperparameters of training and model backbones are provided in Appendix~\ref{sec:detail}.
We provide complete proofs and derivations of all theoretical results in Appendix~\ref{sec:derivations}.



\bibliography{iclr2023_conference}
\bibliographystyle{iclr2023_conference}

\appendix

\section{Derivations}
\label{sec:derivations}

\subsection{Zero CoM Subspace}
\label{sec:com}

The zero CoM subspace $\ZCoM = \{\vx \in \sR^{Mn}: \sum_{i=1}^M \vx^i = \vzero \}$ is a $(M-1)n$ dimensional subspace of $\sR^{Mn}$. Therefore, there exists an isometric isomorphism $\phi$ from $\sR^{(M-1)n}$ to $\ZCoM$, i.e., $\phi$ is a linear bijection from $\sR^{(M-1)n}$ to $\ZCoM$, and $\| \phi (\hat{\vx}) \|_2 = \| \hat{\vx} \|_2$ for all $\hat{\vx} \in \sR^{(M-1)n}$. We use $A_\phi \in \sR^{Mn \times (M-1)n}$ represent the matrix corresponding to $\phi$, so we have $\phi(\hat{\vx}) = A_\phi \hat{\vx}$. An important property of the isometric isomorphism is that $A_\phi A_\phi^\top \vx = \vx - \overline{\vx}$ for all $\vx \in \sR^{Mn}$. We show the proof as following.

\begin{proposition}
\label{pro:proj}
Suppose $\phi$ is an isometric isomorphism from $\sR^{(M-1)n}$ to $\ZCoM$, and let $A_\phi \in \sR^{Mn \times (M-1) n}$ be the matrix corresponding to $\phi$. Then we have $A_\phi A_\phi^\top \vx = \vx - \overline{\vx}$ for all $\vx \in \sR^{Mn}$, where $\overline{\vx} = \frac{1}{M} \sum_{i=1}^M \vx^i$.

\begin{proof}
We consider a new subspace of $\sR^{Mn}$, $\ZCoM^\perp = \{\vx \in \sR^{Mn}: \vx \perp \ZCoM\}$, i.e., the orthogonal component of $\ZCoM$. 
We can verify that $\ZCoM^\perp = \{\vx \in \sR^{Mn}: \vx_1 = \vx_2 = \cdots = \vx_M \}$, and $\ZCoM^\perp$ is $n$ dimensional. Thus, there exists an isometric isomorphism $\psi$ from $\sR^n$ to $\ZCoM^\perp$. Let $A_\psi \in \sR^{M n \times n}$ represent the matrix corresponding to $\psi$. Then we define $\lambda(\hat{\vx}, \hat{\vy}) = \phi(\hat{\vx}) + \psi(\hat{\vy})$, where $\hat{\vx} \in \sR^{(M-1) n}$ and $\hat{\vy} \in \sR^n$. The image of $\lambda$ is $\{\vx+ \vy: \vx \in \ZCoM, \vy \in \ZCoM^\perp\} = \sR^{Mn}$. Therefore, $\lambda$ is a linear bijection from $\sR^{Mn}$ to $\sR^{Mn}$, and the matrix corresponding to $\lambda$ is $A_\lambda = [A_\phi, A_\psi]$. Furthermore, $\lambda$ is an isometric isomorphism, since $\|\lambda (\hat{\vx}, \hat{\vy})\|^2 = \| \phi (\hat{\vx}) \|^2 + \| \psi (\hat{\vy}) \|^2 + 2 \left<\phi (\hat{\vx}), \psi (\hat{\vy}) \right> = \| \hat{\vx} \|^2 + \| \hat{\vy} \|^2 = \|(\hat{\vx}^\top, \hat{\vy}^\top)^\top \|^2$. This means $A_\lambda$ is orthogonal transformation. Therefore, $A_\lambda A_\lambda^\top = A_\phi A_\phi^\top \vx + A_\psi A_\psi^\top = \mI$ and $A_\phi A_\phi^\top \vx + A_\psi A_\psi^\top \vx = \vx$. Since $A_\phi A_\phi^\top \vx \in \ZCoM$ and $A_\psi A_\psi^\top \vx \in \ZCoM^\top$, we can conclude that $A_\phi A_\phi^\top \vx$ is the orthogonal projection of $\vx$ to $\ZCoM$, which is exactly $\vx - \overline{\vx}$.
\end{proof}
\end{proposition}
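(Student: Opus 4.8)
The plan is to recognize that the claimed identity $A_\phi A_\phi^\top \vx = \vx - \overline{\vx}$ simply asserts that $A_\phi A_\phi^\top$ is the orthogonal projector onto $\ZCoM$, and then to establish this in two independent halves: first that $A_\phi A_\phi^\top$ is \emph{an} orthogonal projector onto $\ZCoM$, and second that the CoM-removal map $\vx \mapsto \vx - \overline{\vx}$ is \emph{the} orthogonal projector onto $\ZCoM$. Since orthogonal projection onto a fixed subspace is unique, the two maps must coincide.

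First I would unpack the hypothesis that $\phi$ is an isometric isomorphism. The isometry condition $\|A_\phi \hat{\vx}\|_2 = \|\hat{\vx}\|_2$ for all $\hat{\vx} \in \sR^{(M-1)n}$ is equivalent to $\hat{\vx}^\top A_\phi^\top A_\phi \hat{\vx} = \hat{\vx}^\top \hat{\vx}$ for all $\hat{\vx}$, and since $A_\phi^\top A_\phi$ is symmetric this quadratic-form identity forces $A_\phi^\top A_\phi = \mI$; that is, the columns of $A_\phi$ are orthonormal and, because $\phi$ is onto, span $\ZCoM$. With $A_\phi^\top A_\phi = \mI$ in hand, for any $\vx \in \sR^{Mn}$ the vector $A_\phi A_\phi^\top \vx$ lies in the column space of $A_\phi$, namely $\ZCoM$, while the residual satisfies $A_\phi^\top(\vx - A_\phi A_\phi^\top \vx) = A_\phi^\top \vx - (A_\phi^\top A_\phi) A_\phi^\top \vx = \vzero$, so the residual is orthogonal to $\ZCoM$. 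Hence $A_\phi A_\phi^\top$ is exactly the orthogonal projection onto $\ZCoM$.

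Next I would identify the CoM-removal map with this projection. I would verify that $\vx - \overline{\vx}$, read as subtracting $\overline{\vx}$ from each of the $M$ coordinate blocks, lies in $\ZCoM$: its block sum is $\sum_{i=1}^M (\vx^i - \overline{\vx}) = \vzero$. Then I would check that the complementary vector, whose every block equals $\overline{\vx}$, is orthogonal to $\ZCoM$: for any $\vy \in \ZCoM$ the inner product equals $\left< \sum_{i=1}^M \vy^i, \overline{\vx} \right> = 0$ because $\sum_{i=1}^M \vy^i = \vzero$. Thus $\vx = (\vx - \overline{\vx}) + (\overline{\vx}, \dots, \overline{\vx})$ decomposes $\vx$ into its $\ZCoM$ and $\ZCoM^\perp$ components, so $\vx - \overline{\vx}$ is the orthogonal projection of $\vx$ onto $\ZCoM$.

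Since both $A_\phi A_\phi^\top$ and the map $\vx \mapsto \vx - \overline{\vx}$ are the orthogonal projection onto the same subspace $\ZCoM$, uniqueness gives $A_\phi A_\phi^\top \vx = \vx - \overline{\vx}$ for all $\vx$. The only steps requiring any care are extracting $A_\phi^\top A_\phi = \mI$ from the isometry hypothesis, which is a routine polarization-type argument, and keeping the block structure straight so that ``$\vx - \overline{\vx}$'' is correctly interpreted as a per-block subtraction; neither is a genuine obstacle. This route is somewhat more direct than building the full isometry $A_\lambda = [A_\phi, A_\psi]$ on all of $\sR^{Mn}$, although that construction would work equally well.
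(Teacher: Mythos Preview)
Your proof is correct and, as you yourself note at the end, takes a slightly more streamlined route than the paper. The paper introduces an auxiliary isometric isomorphism $\psi:\sR^n\to\ZCoM^\perp$, forms the block matrix $A_\lambda=[A_\phi,A_\psi]$, argues that $A_\lambda$ is orthogonal, and then reads off $A_\phi A_\phi^\top + A_\psi A_\psi^\top = \mI$ to conclude that $A_\phi A_\phi^\top$ is the projector onto $\ZCoM$. You instead extract $A_\phi^\top A_\phi=\mI$ directly from the isometry hypothesis and verify the projection property of $A_\phi A_\phi^\top$ without ever constructing the complementary piece $\psi$. Both arguments land on the same final identification of $\vx-\overline{\vx}$ as the orthogonal projection onto $\ZCoM$; yours simply avoids an unnecessary auxiliary object, while the paper's construction has the minor side benefit of exhibiting the full orthogonal decomposition $\sR^{Mn}=\ZCoM\oplus\ZCoM^\perp$ via an explicit isometry.
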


Since $\sR^{(M-1) n}$ and $\ZCoM$ are two intrinsically equivalent spaces, an equivalence of distributions in these two spaces can also be established, as shown in the following propositions.

\begin{proposition}
\label{pro:pdf}
Suppose $\vx$ is a random vector distributed in $\ZCoM$, and $\hat{\vx} = \phi^{-1}(\vx)$ is its equivalent representation in $\sR^{(M-1)n}$. If $\vx \sim q(\vx)$, then $\hat{\vx} \sim \hat{q}(\hat{\vx})$, where $\hat{q}(\hat{\vx}) = q(\phi(\hat{\vx}))$.
\end{proposition}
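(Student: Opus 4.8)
The plan is to read this as a change-of-variables identity for probability densities under the linear isometry $\phi$, so that the whole statement reduces to checking that $\phi$ scales volume by a factor of one. The single point that needs care is the choice of reference measures: the density $q$ on $\ZCoM$ must be understood with respect to the intrinsic $(M-1)n$-dimensional Lebesgue measure on the subspace (equivalently, the Hausdorff measure it inherits from $\sR^{Mn}$), since the ambient Lebesgue measure on $\sR^{Mn}$ assigns $\ZCoM$ measure zero and would render the density ill-defined.

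First I would invoke the change-of-variables formula for the linear embedding $\phi(\hat{\vx}) = A_\phi \hat{\vx}$ of $\sR^{(M-1)n}$ onto $\ZCoM$. Because $A_\phi$ is a tall rectangular matrix rather than a square one, the correct volume-scaling factor replacing $|\det|$ is the Gram determinant $\sqrt{\det(A_\phi^\top A_\phi)}$, so the pushforward density of $\hat{\vx} = \phi^{-1}(\vx)$ is $\hat{q}(\hat{\vx}) = q(\phi(\hat{\vx}))\sqrt{\det(A_\phi^\top A_\phi)}$.

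Next I would evaluate this factor using the isometry hypothesis. Since $\| \phi(\hat{\vx}) \|_2 = \| \hat{\vx} \|_2$ for all $\hat{\vx}$, we have $\hat{\vx}^\top A_\phi^\top A_\phi \hat{\vx} = \hat{\vx}^\top \hat{\vx}$ for every $\hat{\vx} \in \sR^{(M-1)n}$; as a symmetric matrix is determined by its quadratic form, this forces $A_\phi^\top A_\phi = \mI$. Hence $\sqrt{\det(A_\phi^\top A_\phi)} = 1$ and the formula collapses to $\hat{q}(\hat{\vx}) = q(\phi(\hat{\vx}))$, which is exactly the claim. This identity $A_\phi^\top A_\phi = \mI$ is the same fact already exploited in the proof of Proposition~\ref{pro:proj}.

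The main obstacle here is definitional rather than computational: one must set up the reference measures carefully so that the statement is even meaningful, and justify that an isometric isomorphism preserves the $(M-1)n$-dimensional volume form. Once the embedding is viewed through its Gram matrix $A_\phi^\top A_\phi$, the isometry assumption gives the identity immediately and the Jacobian factor is one, after which the conclusion is a one-line application of the density transformation rule.
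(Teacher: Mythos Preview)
Your argument is correct. The paper itself states Proposition~\ref{pro:pdf} without proof, treating it as a standard change-of-variables fact, so there is no official proof to compare against; your approach is exactly the natural one to fill in that gap. Your observation that the density $q$ must be taken with respect to the intrinsic $(M-1)n$-dimensional Lebesgue measure on $\ZCoM$ is well placed, and the reduction to $A_\phi^\top A_\phi = \mI$ via the isometry hypothesis is the right computation. One small remark: the proof of Proposition~\ref{pro:proj} in the paper actually invokes the companion identity $A_\phi A_\phi^\top + A_\psi A_\psi^\top = \mI$ (from $A_\lambda A_\lambda^\top = \mI$), not $A_\phi^\top A_\phi = \mI$ directly; both follow from the orthogonality of $A_\lambda$ established there, but you have in any case derived $A_\phi^\top A_\phi = \mI$ independently from the norm-preserving property, so your proof is self-contained.
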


\begin{proposition}
\label{pro:trans}
Suppose $\vx, \vy$ is are two random vectors distributed in $\ZCoM$, and $\hat{\vx} = \phi^{-1}(\vx), \hat{\vy} = \phi^{-1}(\vy)$ are their equivalent representations in $\sR^{(M-1)n}$. If $\vx | \vy \sim q(\vx|\vy)$, then $\hat{\vx} | \hat{\vy} \sim \hat{q}(\hat{\vx}|\hat{\vy})$, where $\hat{q}(\hat{\vx}|\hat{\vy}) = q(\phi(\hat{\vx})|\phi(\hat{\vy})))$.
\end{proposition}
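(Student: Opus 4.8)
The plan is to reduce this conditional statement to the marginal change-of-variables result already secured in Proposition~\ref{pro:pdf}, by freezing the conditioning variable. First I would fix an arbitrary $\hat{\vy} \in \sR^{(M-1)n}$ and set $\vy = \phi(\hat{\vy}) \in \ZCoM$. Because $\phi$ is a bijection, conditioning on $\hat{\vy}$ is equivalent to conditioning on $\vy = \phi(\hat{\vy})$, so the conditional law of $\hat{\vx}$ given $\hat{\vy}$ coincides with the conditional law of $\hat{\vx}$ given $\vy$. It therefore suffices to analyze, for this fixed $\vy$, the pushforward under $\phi^{-1}$ of the distribution $q(\cdot \mid \vy)$, which is itself a genuine probability distribution supported on $\ZCoM$.

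The key step is then a direct application of Proposition~\ref{pro:pdf}, with the unconditional density $q(\vx)$ replaced by the conditional density $\vx \mapsto q(\vx \mid \vy)$. Since $\hat{\vx} = \phi^{-1}(\vx)$ and $\phi$ is an isometric isomorphism, the density is transported without distortion, giving that $\hat{\vx}$ (still conditioned on $\vy$) has density $\hat{\vx} \mapsto q(\phi(\hat{\vx}) \mid \vy) = q(\phi(\hat{\vx}) \mid \phi(\hat{\vy}))$ on $\sR^{(M-1)n}$. Combining this with the identification of the conditioning events from the previous step yields exactly $\hat{q}(\hat{\vx}\mid\hat{\vy}) = q(\phi(\hat{\vx})\mid\phi(\hat{\vy}))$, as claimed.

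The only point needing care is the change-of-variables factor underlying Proposition~\ref{pro:pdf}: densities here are taken with respect to the Lebesgue (Hausdorff) measure on the $(M-1)n$-dimensional subspace $\ZCoM$, and $\phi$ transports them with unit Jacobian precisely because $A_\phi$ has orthonormal columns, i.e. $A_\phi^\top A_\phi = \mI$, so $\sqrt{\det(A_\phi^\top A_\phi)} = 1$ and the pushforward density equals the composition $q \circ \phi$ with no extra factor. Since that fact is already established in Proposition~\ref{pro:pdf}, the conditional statement adds essentially no new analytic difficulty; the main obstacle is purely bookkeeping, namely verifying that conditioning commutes with the isometric reparameterization, which the bijectivity of $\phi$ guarantees.

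An alternative route I would keep in reserve is to apply a product-space version of Proposition~\ref{pro:pdf} to the joint law on $\ZCoM \times \ZCoM$ via the isometry $\phi \times \phi$, obtaining $\hat{q}(\hat{\vx},\hat{\vy}) = q(\phi(\hat{\vx}),\phi(\hat{\vy}))$ together with $\hat{q}(\hat{\vy}) = q(\phi(\hat{\vy}))$, and then dividing through $\hat{q}(\hat{\vx}\mid\hat{\vy}) = \hat{q}(\hat{\vx},\hat{\vy})/\hat{q}(\hat{\vy})$. I would nevertheless prefer the conditioning argument above, since it reuses Proposition~\ref{pro:pdf} verbatim rather than requiring its extension to a product space.
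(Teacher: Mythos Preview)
Your argument is correct. The paper, however, does not actually supply a proof of this proposition: it states Proposition~\ref{pro:trans} (and likewise Proposition~\ref{pro:pdf}) without proof, presenting both as immediate consequences of the fact that $\phi$ is an isometric isomorphism between $\sR^{(M-1)n}$ and $\ZCoM$. Your approach of freezing the conditioning variable and invoking Proposition~\ref{pro:pdf} on the conditional density $q(\cdot\mid\vy)$ is a clean way to fill in those details, and your remark that the Jacobian factor is unity because $A_\phi^\top A_\phi = \mI$ is exactly the point that makes Proposition~\ref{pro:pdf} work in the first place. The alternative joint-density route you sketch is equally valid; either would serve as a complete justification where the paper leaves the result unproven.
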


\subsection{SDE in the Product Space}
\label{sec:rsde_proof}

\begin{definition}[Gaussian distributions in the zero CoM subspace]
\label{def:normal_com}
Suppose $\vmu \in \ZCoM$. Let
$\gN_X(\vx|\vmu, \sigma^2) \coloneqq (2 \pi \sigma^2)^{- (M-1) n / 2} \exp (- \frac{1}{2 \sigma^2} \|\vx - \vmu \|_2^2)$, which is the isotropic Gaussian distribution with mean $\vmu$ and variance $\sigma^2$ in the zero CoM subspace $\ZCoM$.
\end{definition}

\begin{proposition}[Transition kernels of a SDE in the zero CoM subspace]
\label{pro:sde_trans_com}
Suppose $\md \vx = f(t) \vx \md t + g(t) \md \vw_x$, $\vx_0 \sim q(\vx_0)$ is a SDE in the zero CoM subspace. Then the transition kernel from $\vx_s$ to $\vx_t$ ($0 \leq s<t \leq T$) can be expressed as $q(\vx_t|\vx_s) = \gN_X(\vx_t | \sqrt{\alpha_{t|s}} \vx_s, \beta_{t|s})$, where $\alpha_{t|s} = \exp(2 \int_s^t f(\tau) \md \tau )$ and $\beta_{t|s} = \alpha_{t|s} \int_s^t g(\tau)^2 / \alpha_{\tau|s} \md \tau$ are two scalars determined by $f(\cdot)$ and $g(\cdot)$.
\begin{proof}
Firstly, we map the process $\{\vx_t\}_{t=0}^T$ in the zero CoM subspace to the equivalent space $\sR^{(M-1)n}$ through the isometric isomorphism $\phi$ introduced in Appendix~\ref{sec:com}. This produces a new process $\{\hat{\vx}\}_{t=0}^T$, where $\hat{\vx}_t = \phi^{-1}(\vx_t)$. By applying $\phi^{-1}$ to the SDE in the zero CoM subspace, we know $\hat{\vx}_t = \phi^{-1}(\vx_t)$ satisfies the following SDE in $\sR^{(M-1)n}$:
\begin{align}
\label{eq:sde_l}
    \md \hat{\vx} = f(t) \hat{\vx} \md t + g(t) \md \hat{\vw}, \quad \hat{\vx}_0 \sim \hat{q}(\hat{\vx}_0),
\end{align}
where $\hat{\vw}$ is the standard Wiener process in $\sR^{(M-1)n}$ and $\hat{q}(\hat{\vx}_0) = q(\phi(\hat{\vx}_0))$. According to \cite{song2020score}, the transition of Eq.~{(\ref{eq:sde_l})} from $\hat{\vx}_s$ to $\hat{\vx}_t$ ($s<t$) can be expressed as $\hat{q}(\hat{\vx}_t | \hat{\vx}_s) = \gN(\hat{\vx}_t | \sqrt{\alpha_{t|s}} \hat{\vx}_s, \beta_{t|s} \mI)$, where $\alpha_{t|s} = \exp(2 \int_s^t f(\tau) \md \tau )$ and $\beta_{t|s} = \alpha_{t|s} \int_s^t g(\tau)^2 / \alpha_{\tau|s} \md \tau$ are two scalars determined by $f(\cdot)$ and $g(\cdot)$. According to Proposition~\ref{pro:trans}, we know the transition kernel $q(\vx_t|\vx_s)$ from $\vx_s$ to $\vx_t$ satisfies 
\begin{align*}
q(\vx_t | \vx_s) = & \hat{q}(\phi^{-1}(\vx_t) | \phi^{-1}(\vx_s) ) = (2\pi \beta_{t|s})^{-(M-1)n/2} \exp(-\frac{1}{2 \beta_{t|s}} \|\phi^{-1}(\vx_t) - \sqrt{\alpha_{t|s}} \phi^{-1}(\vx_s) \|_2^2) \\
= & (2\pi \beta_{t|s})^{-(M-1)n/2} \exp(-\frac{1}{2 \beta_{t|s}} \|\phi^{-1}(\vx_t - \sqrt{\alpha_{t|s}} \vx_s) \|_2^2) \qquad\mbox{// linearity of $\phi^{-1}$} \\
= & (2\pi \beta_{t|s})^{-(M-1)n/2} \exp(-\frac{1}{2 \beta_{t|s}} \|\vx_t - \sqrt{\alpha_{t|s}} \vx_s\|_2^2) \qquad\mbox{// norm preserving of $\phi^{-1}$} \\
= & \gN_X(\vx_t| \sqrt{\alpha_{t|s}} \vx_s, \beta_{t|s}).
\end{align*}
\end{proof}
\end{proposition}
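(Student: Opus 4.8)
The plan is to transport the entire problem out of the zero CoM subspace $\ZCoM$, where isotropic Gaussian densities are awkward because $\ZCoM$ is only a $(M-1)n$-dimensional subspace of $\sR^{Mn}$, into the flat space $\sR^{(M-1)n}$ via the isometric isomorphism $\phi$ constructed in Appendix~\ref{sec:com}, solve the resulting ordinary linear SDE there, and then push the transition kernel back. First I would set $\hat{\vx}_t = \phi^{-1}(\vx_t)$ and apply $\phi^{-1}$ to the defining SDE $\md \vx = f(t)\vx\,\md t + g(t)\,\md\vw_x$. Since $\phi^{-1}$ is linear, the drift $f(t)\vx$ becomes $f(t)\hat{\vx}$; the point that needs care is that the image of the $\ZCoM$-valued Wiener process $\vw_x$ under $\phi^{-1}$ is a genuine standard Wiener process $\hat{\vw}$ in $\sR^{(M-1)n}$, which follows from $\phi$ being norm preserving, so that $\phi^{-1}$ sends the isotropic noise on $\ZCoM$ to isotropic noise on $\sR^{(M-1)n}$ with the correct covariance. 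This yields the flat SDE $\md\hat{\vx} = f(t)\hat{\vx}\,\md t + g(t)\,\md\hat{\vw}$ with initial law $\hat{q}(\hat{\vx}_0) = q(\phi(\hat{\vx}_0))$, as guaranteed by Proposition~\ref{pro:pdf}.

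Second, I would invoke the standard transition-kernel result for linear scalar-coefficient SDEs in Euclidean space, as in \citep{song2020score}: the kernel is Gaussian, $\hat{q}(\hat{\vx}_t|\hat{\vx}_s) = \gN(\hat{\vx}_t|\sqrt{\alpha_{t|s}}\,\hat{\vx}_s, \beta_{t|s}\mI)$, with $\alpha_{t|s} = \exp(2\int_s^t f(\tau)\,\md\tau)$ and $\beta_{t|s} = \alpha_{t|s}\int_s^t g(\tau)^2/\alpha_{\tau|s}\,\md\tau$ obtained by solving the mean and variance ODEs attached to the linear SDE. Because the coefficients $f$ and $g$ are identical in the flat and subspace formulations, these are exactly the scalars claimed in the statement.

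Third, I would translate the kernel back to $\ZCoM$ using Proposition~\ref{pro:trans}, which gives $q(\vx_t|\vx_s) = \hat{q}(\phi^{-1}(\vx_t)|\phi^{-1}(\vx_s))$. Writing out the Gaussian density and using linearity of $\phi^{-1}$ to combine $\phi^{-1}(\vx_t) - \sqrt{\alpha_{t|s}}\phi^{-1}(\vx_s) = \phi^{-1}(\vx_t - \sqrt{\alpha_{t|s}}\vx_s)$, followed by norm preservation $\|\phi^{-1}(\cdot)\|_2 = \|\cdot\|_2$, the exponent reduces to $-\frac{1}{2\beta_{t|s}}\|\vx_t - \sqrt{\alpha_{t|s}}\vx_s\|_2^2$ while the normalizing constant remains $(2\pi\beta_{t|s})^{-(M-1)n/2}$. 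By Definition~\ref{def:normal_com} this is precisely $\gN_X(\vx_t|\sqrt{\alpha_{t|s}}\vx_s, \beta_{t|s})$, which finishes the argument.

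The main obstacle I anticipate lies in the first step: rigorously justifying that $\phi^{-1}$ carries the subspace Wiener process $\vw_x$ to a standard Wiener process in $\sR^{(M-1)n}$, and correspondingly that the push-forward of the isotropic subspace Gaussian matches the $(M-1)n$-dimensional normalization of Definition~\ref{def:normal_com}. Everything downstream is bookkeeping once the isometry identifies $\ZCoM$ with $\sR^{(M-1)n}$; the only genuine subtlety is ensuring the dimension count $(M-1)n$, rather than $Mn$, is used consistently in the density normalization, which is exactly the place where the lower dimensionality of the zero CoM subspace enters.
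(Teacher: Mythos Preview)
Your proposal is correct and matches the paper's proof essentially step for step: map to $\sR^{(M-1)n}$ via $\phi^{-1}$, invoke the Euclidean transition-kernel result from \cite{song2020score}, and push back using Proposition~\ref{pro:trans} together with linearity and norm preservation of $\phi^{-1}$. The only difference is that you spell out more explicitly why $\phi^{-1}(\vw_x)$ is a standard Wiener process and flag the $(M-1)n$ normalization, points the paper leaves implicit.
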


\begin{proposition}[Transition kernels of the SDE in the product space]
\label{pro:sde_trans_prod}
Suppose $\md \vz = f(t) \vz \md t + g(t) \md (\vw_x, \vw_h), \vz_0 \sim q(\vz_0)$ is the SDE in the product space $\ZCoM \times \sR^{Md}$, as introduced in Eq.~{(\ref{eq:sde})}. Then the transition kernel from $\vz_s$ to $\vz_t$ ($0 \leq s<t \leq T$) can be expressed as $q(\vz_t|\vz_s) = q(\vx_t|\vx_s) q(\vh_t|\vh_s)$, where $q(\vx_t|\vx_s) = \gN_X(\vx_t | \sqrt{\alpha_{t|s}} \vx_s, \beta_{t|s})$ and $q(\vh_t|\vh_s) = \gN(\vh_t|\sqrt{\alpha_{t|s}} \vh_s, \beta_{t|s})$. Here $\alpha_{t|s}$ and $\beta_{t|s}$ are defined as in Proposition~\ref{pro:sde_trans_com}.
\begin{proof}
Since $\vw_x$ and $\vw_h$ are independent to each other, the transition kernel from $\vz_s$ to $\vz_t$ can be factorized as $q(\vz_t|\vz_s) = q(\vx_t|\vx_s) q(\vh_t|\vh_s)$, where $q(\vx_t|\vx_s)$ is the transition kernel of $\md \vx = f(t) \vx \md t + g(t) \md \vw_x$ and $q(\vh_t|\vh_s)$ is the transition kernel of $\md \vh = f(t) \vh \md t + g(t) \md \vw_h$. According to Proposition~\ref{pro:sde_trans_com} and \cite{song2020score}, we have $q(\vx_t|\vx_s) = \gN_X(\vx_t | \sqrt{\alpha_{t|s}} \vx_s, \beta_{t|s})$ and $q(\vh_t|\vh_s) = \gN(\vh_t|\sqrt{\alpha_{t|s}} \vh_s, \beta_{t|s})$.
\end{proof}
\end{proposition}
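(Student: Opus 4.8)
The plan is to exploit the fact that the product-space SDE decouples completely into two independent one-component SDEs — one for the conformation $\vx$ driven by $\vw_x$ in $\ZCoM$, and one for the features $\vh$ driven by $\vw_h$ in $\sR^{Md}$ — and then to invoke Proposition~\ref{pro:sde_trans_com} for the $\vx$ part and the classical linear-Gaussian transition formula for the $\vh$ part. The factorized form of the kernel should then drop out of the independence of the two noise sources.

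First I would observe that the drift $f(t)\vz$ acts separately on the two blocks: it is a scalar multiple of the identity on each of $\ZCoM$ and $\sR^{Md}$, with no cross terms. Likewise the driving noise $(\vw_x, \vw_h)$ consists of two \emph{independent} Wiener processes, one in each block. Consequently the process $\{\vz_t\}$ splits into $\md \vx = f(t)\vx\,\md t + g(t)\,\md \vw_x$ in $\ZCoM$ and $\md \vh = f(t)\vh\,\md t + g(t)\,\md \vw_h$ in $\sR^{Md}$, with neither the drift nor the diffusion coupling the two components.

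Next I would argue that this decoupling, together with the independence of $\vw_x$ and $\vw_h$, forces the joint transition kernel to factorize. Conditioned on $\vz_s = (\vx_s, \vh_s)$, the future increments of $\vx$ and of $\vh$ are driven by disjoint, mutually independent sources of randomness, so $\vx_t$ and $\vh_t$ are conditionally independent given $\vz_s$, yielding $q(\vz_t|\vz_s) = q(\vx_t|\vx_s)\,q(\vh_t|\vh_s)$. It then remains only to identify each factor: for the conformation block I would apply Proposition~\ref{pro:sde_trans_com} verbatim to obtain $q(\vx_t|\vx_s) = \gN_X(\vx_t \mid \sqrt{\alpha_{t|s}}\,\vx_s,\, \beta_{t|s})$, and for the feature block, since $\md \vh = f(t)\vh\,\md t + g(t)\,\md \vw_h$ is an ordinary linear SDE in $\sR^{Md}$, the standard mean/covariance computation (as in \cite{song2020score}) gives $q(\vh_t|\vh_s) = \gN(\vh_t \mid \sqrt{\alpha_{t|s}}\,\vh_s,\, \beta_{t|s})$. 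Crucially, the scalars $\alpha_{t|s}$ and $\beta_{t|s}$ coincide across the two blocks because the \emph{same} $f$ and $g$ govern both equations, so no separate pair of coefficients arises.

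The only substantive point — and the step I expect to require the most care — is making precise that the factorization of the kernel really follows from the independence of the two driving Wiener processes and the absence of cross terms, i.e. that conditional independence of $\vx_t$ and $\vh_t$ given $\vz_s$ holds rigorously rather than merely heuristically. Once that is granted, everything else reduces to the already-established Proposition~\ref{pro:sde_trans_com} for the $\vx$ component and the textbook linear-SDE moment formula for the $\vh$ component, so I do not anticipate any further difficulty.
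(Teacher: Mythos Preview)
Your proposal is correct and follows exactly the same approach as the paper: decouple the SDE into the $\vx$-block and the $\vh$-block using the independence of $\vw_x$ and $\vw_h$, factor the transition kernel accordingly, then invoke Proposition~\ref{pro:sde_trans_com} for the $\vx$-component and the standard linear-SDE result from \cite{song2020score} for the $\vh$-component. Your additional remarks (that the drift has no cross terms, and that the same $f,g$ force the same $\alpha_{t|s},\beta_{t|s}$ in both factors) are implicit in the paper's argument and only make the reasoning more explicit.
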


\begin{remark}
\label{re:marginal}
The marginal distribution of $\vz_t$ is 
\begin{align*}
    q_t(\vz_t) = \int_Z \int_{\sR^{Md}} q(\vz_0) q(\vx_t|\vx_0) q(\vh_t|\vh_0) \md \vh_0 \lambda(\md \vx_0),
\end{align*}
where $\lambda$ is the Lebesgue measure in the zero CoM subspace $\ZCoM$. While $q_t(\vz_t)$ is a distribution in $\ZCoM \times \sR^{Md}$, it has a natural differentiable extension to $\sR^{Mn} \times \sR^{Md}$, since $q(\vx_t|\vx_0) = \gN_X(\vx_t| \sqrt{\alpha_{t|s}} \vx_s, \beta_{t|s})$ has a differentiable extension to $\sR^{Mn}$ according to Definition~\ref{def:normal_com}. Thus, we can take gradient of $q_t(\vz_t)$ w.r.t. $\vx_t$ in the whole space $\sR^{Mn}$.
\end{remark}

\begin{proposition}[Time reversal of the SDE in the product space]
Suppose $\md \vz = f(t) \vz \md t + g(t) \md (\vw_x, \vw_h), \vz_0 \sim q(\vz_0)$ is the SDE in the product space $\ZCoM \times \sR^{Md}$, as introduced in Eq.~{(\ref{eq:sde})}. Then its time reversal satisfies the following reverse-time SDE, which can be represented by both the score function form and the noise prediction form:
\begin{align*}
    \md \vz = & [f(t) \vz - g(t)^2 \underbrace{(\nabla_\vx \log q_t(\vz) - \overline{\nabla_\vx \log q_t(\vz)}, \nabla_\vh \log q_t(\vz))}_{\text{\normalsize{score function form}}}] \md t + g(t) \md (\tilde{\vw}_x, \tilde{\vw}_h), \\
    = & [f(t) \vz + \frac{g(t)^2}{\sqrt{\beta_{t|0}}} \underbrace{\E_{q(\vz_0|\vz_t)} \vepsilon_t}_{\text{\normalsize{noise prediction form}}}] \md t + g(t) \md (\tilde{\vw}_x, \tilde{\vw}_h), \quad \vz_T \sim q_T(\vz_T),
\end{align*}
where $\tilde{\vw}_x$ and $\tilde{\vw}_h$ are reverse-time standard Wiener processes in $\ZCoM$ and $\sR^{Md}$ respectively, and $\vepsilon_t = \frac{\vz_t - \sqrt{\alpha_{t|0}} \vz_0}{\sqrt{\beta_{t|0}}}$ is the standard Gaussian noise in $\ZCoM \times \sR^{Md}$ injected to $\vz_0$.

Furthermore, we have $\nabla_\vx \log q_t(\vx) - \overline{\nabla_\vx \log q_t (\vx)} = - \frac{1}{\sqrt{\beta_{t|0}}} \E_{q(\vx_0|\vx_t)} \vepsilon_t$, where $\vepsilon_t = \frac{\vx_t - \sqrt{\alpha_{t|0}} \vx_0}{\sqrt{\beta_{t|0}}}$ is the standard Gaussian noise in the zero CoM subspace injected to $\vx_0$.
\end{proposition}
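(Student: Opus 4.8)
The plan is to reduce everything to the standard Euclidean time-reversal theorem of \cite{song2020score} (Anderson's formula) by transporting the process through the isometric isomorphism $\phi$ from Appendix~\ref{sec:com}, and then to read off how the CoM-subtraction and the noise-prediction form emerge when the reverse SDE is transported back. Throughout, the feature part $\vh$ lives in $\sR^{Md}$ and is never transformed, so the only real work is on the conformation part $\vx \in \ZCoM$; the independence of $\vw_x$ and $\vw_h$ lets us treat the two blocks in parallel, exactly as in the proof of Proposition~\ref{pro:sde_trans_prod}.

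First I would map the forward SDE in Eq.~(\ref{eq:sde}) to the flat space $\sR^{(M-1)n} \times \sR^{Md}$ by setting $\hat{\vx}_t = \phi^{-1}(\vx_t)$ and leaving $\vh_t$ unchanged. As in the proof of Proposition~\ref{pro:sde_trans_com}, linearity of $\phi$ turns this into the ordinary nondegenerate SDE $\md \hat{\vz} = f(t)\hat{\vz}\,\md t + g(t)\,\md(\hat{\vw}, \vw_h)$ driven by a standard Wiener process, whose marginal density satisfies $\hat{q}_t(\hat{\vx}, \vh) = q_t(\phi(\hat{\vx}), \vh) = q_t(A_\phi \hat{\vx}, \vh)$ by Proposition~\ref{pro:pdf}. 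For such an SDE the reverse-time SDE of \cite{song2020score} applies directly, giving $\md \hat{\vz} = [f(t)\hat{\vz} - g(t)^2 \nabla_{\hat{\vz}} \log \hat{q}_t(\hat{\vz})]\,\md t + g(t)\,\md(\tilde{\hat{\vw}}, \tilde{\vw}_h)$.

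Next I would transport this reverse SDE back to the product space by applying $\phi$, i.e. multiplying the $\hat{\vx}$-block by $A_\phi$. The drift $f(t)\hat{\vx}$ and the noise $g(t)\,\md\tilde{\hat{\vw}}$ map to $f(t)\vx$ and $g(t)\,\md\tilde{\vw}_x$, because $A_\phi$ is an isometry onto $\ZCoM$ and hence $A_\phi\tilde{\hat{\vw}}$ is again a standard reverse-time Wiener process in $\ZCoM$. The crucial step is the score term: the chain rule applied to $\hat{q}_t(\hat{\vx},\vh) = q_t(A_\phi\hat{\vx}, \vh)$ gives $\nabla_{\hat{\vx}} \log \hat{q}_t = A_\phi^\top \nabla_\vx \log q_t$, where the full-space gradient $\nabla_\vx \log q_t$ is well defined through the differentiable extension of Remark~\ref{re:marginal}. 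Multiplying by $A_\phi$ and invoking Proposition~\ref{pro:proj} (namely $A_\phi A_\phi^\top \vx = \vx - \overline{\vx}$) produces exactly $\nabla_\vx \log q_t - \overline{\nabla_\vx \log q_t}$, which establishes the score-function form. This is the step I expect to be the conceptual crux: the CoM-subtraction is not an ad hoc modification but the automatic image of the ordinary Euclidean score under $\phi$, and getting the bookkeeping of $A_\phi$, $A_\phi^\top$ and the extended gradient right is the only delicate part.

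Finally I would convert to the noise-prediction form and establish the furthermore identity, both via Tweedie's formula. Differentiating the joint marginal $q_t(\vz_t) = \int q(\vz_0)\,q(\vx_t|\vx_0)\,q(\vh_t|\vh_0)\,\md\vz_0$ under the integral sign, using the differentiable Gaussian kernel of Remark~\ref{re:marginal}, gives $\nabla_\vx\log q_t(\vz_t) = -\frac{1}{\sqrt{\beta_{t|0}}}\E_{q(\vz_0|\vz_t)}\vepsilon_t^x$ and $\nabla_\vh\log q_t(\vz_t) = -\frac{1}{\sqrt{\beta_{t|0}}}\E_{q(\vz_0|\vz_t)}\vepsilon_t^h$. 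Since $\vx_t,\vx_0\in\ZCoM$ the vector $\E_{q(\vz_0|\vz_t)}\vepsilon_t^x$ already has zero CoM, so subtracting its CoM in the score-function form changes nothing, and substituting both blocks yields the noise-prediction form with the joint noise $\vepsilon_t=(\vepsilon_t^x,\vepsilon_t^h)$. The furthermore identity is then the same computation performed on the $\vx$-marginal process alone: the flat Tweedie formula $\nabla_{\hat{\vx}}\log\hat{q}_t = -\frac{1}{\sqrt{\beta_{t|0}}}\E_{\hat{q}(\hat{\vx}_0|\hat{\vx}_t)}\hat{\vepsilon}_t$, multiplied by $A_\phi$ (using $A_\phi\hat{\vx}_t=\vx_t$, $A_\phi\hat{\vx}_0=\vx_0$ so that $A_\phi\hat{\vepsilon}_t=\vepsilon_t$) and combined with Proposition~\ref{pro:proj} on the left-hand side and Proposition~\ref{pro:trans} to rewrite the conditional expectation on the right, gives $\nabla_\vx\log q_t - \overline{\nabla_\vx\log q_t} = -\frac{1}{\sqrt{\beta_{t|0}}}\E_{q(\vx_0|\vx_t)}\vepsilon_t$, completing the proof.
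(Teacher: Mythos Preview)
Your proposal is correct and follows essentially the same route as the paper: map to $\sR^{(M-1)n}\times\sR^{Md}$ via $\phi^{-1}$, apply the standard time-reversal of \cite{song2020score}, and transport back using $A_\phi A_\phi^\top \vx = \vx - \overline{\vx}$ from Proposition~\ref{pro:proj} to see the CoM subtraction appear. The only minor deviation is that for the noise-prediction form you apply Tweedie directly to the differentiable extension of $q_t$ in the product space (and observe that the CoM subtraction is then vacuous because $\vepsilon_t^x$ already lies in $\ZCoM$), whereas the paper applies Tweedie in the flat space $\hat q_t$ and then pushes $\hat{\vepsilon}_t$ through $T$; both are valid and your version makes explicit that $\overline{\nabla_\vx \log q_t}$ in fact vanishes.
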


\begin{proof}
Let $\hat{\vz}_t = (\hat{\vx}_t, \vh_t)$, where $\hat{\vx}_t = \phi^{-1}(\vx_t)$, introduced in the proof of Proposition~\ref{pro:sde_trans_com}. Then $\{\hat{\vz}_t\}_{t=0}^T$ is a process in $\sR^{(M-1)n} \times \sR^{M d}$ determined by the following SDE
\begin{align}
\label{eq:sde_j}
    \md \hat{\vz} = f(t) \hat{\vz} \md t + g(t) \md \hat{\vw}, \quad \hat{\vz}_0 \sim \hat{q}(\hat{\vz}_0),
\end{align}
where $\hat{\vw}$ is the standard Wiener process in $\sR^{(M-1)n} \times \sR^{M d}$ and $\hat{q}(\hat{\vz}_0) = q(\phi(\hat{\vx}_0), \vh)$.

According to \cite{song2020score}, Eq.~{(\ref{eq:sde_j})} has a time reversal:
\begin{align}
\label{eq:rsde_l}
    \md \hat{\vz} = [f(t) \hat{\vz} - g(t)^2 \nabla_{\hat{\vz}} \log \hat{q}_t(\hat{\vz}) ] \md t + g(t) \md \tilde{\vw}, \quad \hat{\vz}_T \sim \hat{q}_T(\hat{\vz}_T),
\end{align}
where $\hat{q}_t(\vz)$ is the marginal distribution of $\hat{\vz}_t$, which satisfies $\hat{q}_t(\hat{\vz}_t) = q_t(\phi(\hat{\vx}_t), \vh_t)$ according to Proposition~\ref{pro:pdf}, and $\tilde{\vw}$ is the reverse-time standard Wiener process in $\sR^{(M-1)n} \times \sR^{M d}$.

Then we apply the linear transformation $T \hat{\vz} = (\phi(\hat{\vx}), \vh)$ to Eq.~{(\ref{eq:rsde_l})}, which maps $\hat{\vz}_t$ back to $\vz_t$. This yields
\begin{align}
\label{eq:rsde_j}
    \! \md \vz = [f(t) \vz - g(t)^2 T(\nabla_{\hat{\vz}} \log \hat{q}_t(\hat{\vz}))] \md t + g(t) \md (\tilde{\vw}_x,\tilde{\vw}_h), \quad \vz_T \sim q_T(\vz_T).
\end{align}
Here $\tilde{\vw}_x$ and $\tilde{\vw}_h$ are reverse-time standard Wiener processes in $\ZCoM$ and $\sR^{Md}$ respectively, and $T(\nabla_{\hat{\vz}} \log \hat{q}_t(\hat{\vz}))$ can be expressed as $(\phi(\nabla_{\hat{\vx}} \log \hat{q}_t (\hat{\vz})), \nabla_{\vh} \log \hat{q}_t (\hat{\vz}))$.

Since $\nabla_{\hat{\vx}} \log \hat{q}_t (\hat{\vz}) = A_\phi^\top \nabla_\vx \log q_t(\vx, \vh) = A_\phi^\top \nabla_\vx \log q_t(\vz)$, we have $\phi(\nabla_{\hat{\vx}} \log \hat{q}_t (\hat{\vz})) = A_\phi A_\phi^\top \nabla_\vx \log q_t(\vz)$, where $A_\phi$ represents the matrix corresponding to $\phi$. According to Proposition~\ref{pro:proj}, we have $\phi(\nabla_{\hat{\vx}} \log \hat{q}_t (\hat{\vz})) = \nabla_\vx \log q_t(\vz) - \overline{\nabla_\vx \log q_t(\vz)}$. Besides, $\nabla_\vh \log \hat{q}_t (\hat{\vz}) = \nabla_\vh \log q_t(\vz)$. Thus, Eq.~{(\ref{eq:rsde_j})} can be written as
\begin{align*}
    \md \vz = [f(t) \vz - g(t)^2 (\nabla_\vx \log q_t(\vz) - \overline{\nabla_\vx \log q_t(\vz)}, \nabla_\vh \log q_t(\vz))] \md t + g(t) \md (\tilde{\vw}_x, \tilde{\vw}_h),
\end{align*}
which is the score function form of the reverse-time SDE.

We can also write the score function in Eq.~{(\ref{eq:rsde_l})} as $\nabla_{\hat{\vz}} \log \hat{q}_t(\hat{\vz}) = \E_{\hat{q}(\hat{\vz}_0|\hat{\vz}_t)} \nabla_{\hat{\vz}} \log \hat{q}(\hat{\vz}_t | \hat{\vz}_0) = -\frac{1}{\sqrt{\beta_{t|0}}}\E_{\hat{q}(\hat{\vz}_0|\hat{\vz}_t)} \hat{\vepsilon}_t$, where $\hat{\vepsilon}_t = \frac{\hat{\vz}_t - \sqrt{\alpha_{t|0}} \hat{\vz}_0}{\sqrt{\beta_{t|0}}}$ is the standard Gaussian noise injected to $\hat{\vz}_0$. With this expression, we have $T(\nabla_{\hat{\vz}} \log \hat{q}_t(\hat{\vz})) = -\frac{1}{\sqrt{\beta_{t|0}}}\E_{\hat{q}(\hat{\vz}_0|\hat{\vz}_t)} T(\hat{\vepsilon}_t) = -\frac{1}{\sqrt{\beta_{t|0}}}\E_{q(\vz_0|\vz_t)} \vepsilon_t$, where $\vepsilon_t = \frac{\vz_t - \sqrt{\alpha_{t|0}} \vz_0}{\sqrt{\beta_{t|0}}}$ is the standard Gaussian noise in $\ZCoM \times \sR^{Md}$ injected to $\vz_0$. Thus, Eq.~{(\ref{eq:rsde_j})} can also be written as
\begin{align*}
    \md \vz = [f(t) \vz + \frac{g(t)^2}{\sqrt{\beta_{t|0}}}\E_{q(\vz_0|\vz_t)} \vepsilon_t] \md t + g(t) \md (\tilde{\vw}_x, \tilde{\vw}_h),
\end{align*}
which is the noise prediction form of the reverse-time SDE.

\end{proof}

\subsection{Equivariance}

\equisde*
\begin{proof}
Suppose $\mR \in \sR^{n\times n}$ is an orthogonal transformation. Let $\vz_t^\vtheta = (\vx_t^\vtheta, \vh_t^\vtheta)$ ($0 \leq t \leq T$) be the process determined by Eq.~{(\ref{eq:rsde_approx})}. Let $\vy_t^\vtheta = \mR \vx_t^\vtheta$ and $\vu_t^\vtheta = (\vy_t^\vtheta, \vh_t^\vtheta)$. We use $p_t^\vu(\vu_t)$ and $p_t^\vz(\vz_t)$ to denote the distributions of $\vu_t^\vtheta$ and $\vz_t^\vtheta$ respectively, and they satisfy $p_t^\vu(\vy_t, \vh_t) = p_t^\vz(\mR^{-1}\vy_t, \vh_t)$. By applying the transformation $T \vz = (\mR \vx, \vh)$ to Eq.~{(\ref{eq:rsde_approx})}, we know the new process $\{\vu_t^\vtheta\}_{t=0}^T$ satisfies the following SDE:
\begin{align*}
    & \md \vu = T \md \vz = [f(t) \vu +  \frac{g(t)^2}{\sqrt{\beta_{t|0}}} (\mR \vepsilon_\vtheta^x(\vz, t), \vepsilon_\vtheta^h(\vz, t))] \md t + g(t) \md (\mR \tilde{\vw}_x, \tilde{\vw}_h) \\
    = & [f(t) \vu +  \frac{g(t)^2}{\sqrt{\beta_{t|0}}} ( \vepsilon_\vtheta^x(\mR \vx, \vh, t), \vepsilon_\vtheta^h(\mR\vx, \vh, t))] \md t + g(t) \md (\mR \tilde{\vw}_x, \tilde{\vw}_h) \quad \mbox{// by equivariance} \\
    = & [f(t) \vu +  \frac{g(t)^2}{\sqrt{\beta_{t|0}}} \vepsilon_\vtheta(\vu, t)] \md t + g(t) \md (\tilde{\vw}_x, \tilde{\vw}_h), \qquad \mbox{// Winner process is isotropic}
\end{align*}
and its initial distribution is $p_T^\vu(\vu_T) = p_T^\vu(\vy_T, \vh_T) = p_T^\vz(\mR^{-1} \vy_T, \vh_T) = p_T(\mR^{-1} \vy_T, \vh_T) = p_T(\vy_T, \vh_T) = p_T (\vu_T)$. Thus, the SDE of $\{\vu_t^\vtheta\}_{t=0}^T$ is exactly the same to that of $\{\vz_t^\vtheta\}_{t=0}^T$. This indicates that the distribution of $\vu_0^\vtheta$ is the same to the distribution of $\vz_0^\vtheta$, i.e., $p_0^\vu(\vu_0) = p_0^\vz(\vu_0) = p_\vtheta(\vu_0)$. Also note that $p_0^\vu(\vu_0) = p_0^\vz(\mR^{-1} \vy_0, \vh_0) = p_\vtheta(\mR^{-1} \vy_0, \vh_0)$. Thus, $p_\vtheta(\vu_0) = p_\vtheta(\mR^{-1} \vy_0, \vh_0)$, and consequently $p_\vtheta(\mR \vx_0, \vh_0) = p_\vtheta(\vx_0, \vh_0)$. This means $p_\vtheta(\vz_0)$ is invariant to any orthogonal transformation, which includes rotational transformations as special cases.
\end{proof}

\equiegsde*
\begin{proof}
Suppose $\mR \in \sR^{n\times n}$ is an orthogonal transformation. Taking gradient to both side of $E(\mR \vx, \vh, c, t) = E(\vx, \vh, c, t)$ w.r.t. $\vx$, we get
\begin{align*}
    \mR^\top \nabla_\vy E(\vy, \vh, c, t)|_{\vy = \mR \vx}   = \nabla_\vx E(\vx, \vh, c, t).
\end{align*}
Multiplying $\mR$ to both sides, we get
\begin{align*}
    \nabla_\vy E(\vy, \vh, c, t)|_{\vy = \mR \vx} = \mR \nabla_\vx E(\vx, \vh, c, t).
\end{align*}

Let $\vphi(\vz, c, t) = (\nabla_\vx E(\vz, c, t) -  \overline{\nabla_\vx E(\vz, c, t)}, \nabla_\vh E(\vz, c, t) )$. Then we have
\begin{align*}
    \vphi(\mR\vx, \vh, c, t) = & (\nabla_\vy E(\vy, \vh, c, t) -  \overline{\nabla_\vy E(\vy, \vh, c, t)}, \nabla_\vh E(\vy, \vh, c, t) )|_{\vy=\mR\vx} \\
    = & (\mR \nabla_\vx E(\vx, \vh, c, t) -  \mR \overline{\nabla_\vx E(\vx, \vh, c, t)}, \nabla_\vh E(\mR \vx, \vh, c, t) ) \\
    = & (\mR (\nabla_\vx E(\vx, \vh, c, t) - \overline{\nabla_\vx E(\vx, \vh, c, t)}), \nabla_\vh E( \vx, \vh, c, t) ).
\end{align*}

Thus, $\vphi(\vz, c, t)$ is equivariant to $\mR$. Let $\hat{\vepsilon}_\vtheta(\vz, c, t) = \vepsilon_\vtheta(\vz, t) + \sqrt{\beta_{t|0}} \vphi(\vz, c, t)$, which is a linear combination of two equivariant functions and is also equivariant to $\mR$. 

Then, Eq.~{(\ref{eq:eegsde})} can be written as
\begin{align*}
    \md \vz = [& f(t) \vz +  \frac{g(t)^2}{\sqrt{\beta_{t|0}}} \hat{\vepsilon}_\vtheta(\vz, c, t) ] \md t + g(t) \md (\tilde{\vw}_x, \tilde{\vw}_h), \quad \vz_T \sim p_T(\vz_T).
\end{align*}
According to Theorem~\ref{thm:equisde}, we know its marginal distribution at time $t=0$, i.e., $p_\vtheta(\vz_0|c)$, is invariant to any rotational transformation.

\end{proof}

\section{Sampling}
\label{sec:sample}
In Algorithm~\ref{alg:sample}, we present the Euler-Maruyama method to sample from EEGSDE in Eq.~\ref{eq:eegsde}.

\begin{algorithm}[H]
    \caption{Sample from EEGSDE using the Euler-Maruyama method}
    \label{alg:sample}
    \begin{algorithmic}
        \REQUIRE Number of steps $N$
        \STATE $\Delta t = \frac{T}{N}$
        \STATE $\vz \leftarrow (\vx - \overline{\vx}, \vh)$, where $\vx \sim \gN(\vzero, 1)$, $\vh \sim \gN(\vzero, 1)$ \hfill \COMMENT{Sample from the prior $p_T(\vz_T)$}
    \FOR{$i = N$ to $1$}
        \STATE $t \leftarrow i \Delta t$
        \STATE $\vg_x \leftarrow \nabla_\vx E(\vz, c, t)$, $\vg_h \leftarrow \nabla_\vh E(\vz, c, t)$ \hfill\COMMENT{Calculate the gradient of the energy function}
        \STATE $\vg \leftarrow (\vg_x - \overline{\vg_x}, \vg_h)$ \hfill\COMMENT{Subtract the CoM of the gradient}
        \STATE $\mF \leftarrow f(t) \vz + g(t)^2 (\frac{1}{\sqrt{\beta_{t|0}}} \vepsilon_\vtheta(\vz, t) + \vg)$
        \STATE $\vepsilon \leftarrow (\vepsilon_x - \overline{\vepsilon_x}, \vepsilon_h)$, where $\vepsilon_x \sim \gN(\vzero, 1)$, $\vepsilon_h \sim \gN(\vzero, 1)$
        \STATE $\vz \leftarrow \vz - \mF \Delta t + g(t) \sqrt{\Delta t} \vepsilon$ \hfill \COMMENT{Update $\vz$ according to Eq.~{(\ref{eq:eegsde})}}
    \ENDFOR
    \RETURN $\vz$
    \end{algorithmic}
\end{algorithm}

\section{Conditional Noise Prediction Networks}
\label{sec:cnpn}

In Eq.~{(\ref{eq:eegsde})}, we can alternatively use a conditional noise prediction network $\vepsilon_\vtheta(\vz, c, t)$ for a stronger guidance, as follows
\begin{align*}
    \md \vz = [& f(t) \vz + g(t)^2 (\frac{1}{\sqrt{\beta_{t|0}}} \vepsilon_\vtheta(\vz, c, t) \nonumber \\
    & \! + (\nabla_\vx E(\vz, c, t) -  \overline{\nabla_\vx E(\vz, c, t)}, \nabla_\vh E(\vz, c, t)) )] \md t + g(t) \md (\tilde{\vw}_x, \tilde{\vw}_h), \ \vz_T \sim p_T(\vz_T).
\end{align*}

The conditional noise prediction network is trained similarly to the unconditional one, using the following MSE loss
\begin{align*}
\min_\vtheta \E_t \E_{q(c, \vz_0, \vz_t)} w(t) \|\vepsilon_\vtheta(\vz_t, c, t) - \vepsilon_t \|^2.
\end{align*}

\section{Parameterization of Noise Prediction Networks}
\label{sec:npn}

We parameterize the noise prediction network following \cite{hoogeboom2022equivariant}, and we provide the specific parameterization for completeness. For the unconditional model $\vepsilon_\vtheta(\vz, t)$, we first concatenate each atom feature $\vh^i$ and $t$, which gives $\vh^{i\prime} = (\vh^i, t)$. Then we input $\vx$ and $\vh' = (\vh^{1\prime}, \dots, \vh^{M\prime})$ to the EGNN as follows
\begin{align*}
    (\va^x, \va^{h\prime}) = \EGNN(\vx, \vh') - (\vx, \vzero).
\end{align*}
Finally, we subtract the CoM of $\va^x$, and gets the parameterization of $\vepsilon_\vtheta(\vz, t)$:
\begin{align*}
    \vepsilon_\vtheta(\vz, t) = (\va^x - \overline{\va^x}, \va^h),
\end{align*}
where $\va^h$ comes from discarding the last component of $\va^{h\prime}$ that corresponds to the time.

For the conditional model $\vepsilon_\vtheta(\vz, c, t)$, we additionally concatenate $c$ to the atom feature $\vh^i$, i.e., $\vh^{i\prime} = (\vh^i, t, c)$, and other parts in the parameterization remain the same.

\section{Details of Energy Functions}
\label{sec:energy}

\subsection{Parameterization of Time-Dependent Models}
\label{sec:param}

The time-dependent property prediction model $g(\vz_t, t)$ is parameterized using the second component in the output of EGNN (see Section~\ref{sec:background}) followed by a decoder (Dec):
\begin{align}
\label{eq:g}
    g (\vz_t, t) = \Dec(\EGNN^h(\vx_t, \vh_t')), \quad \vh_t' = \mathrm{concatenate}(\vh_t, t),
\end{align}
where the concatenation is performed on each atom feature, and the decoder is a small neural network based on \cite{satorras2021egnn}. 
This parameterization ensures that the energy function $E(\vz_t, c, t)$ is invariant to orthogonal transformations, and thus the distribution of generated samples is also invariant according to Theorem~\ref{thm:equiegsde}.

Similarly, the time-dependent multi-label classifier is parameterized by EGNN as
\begin{align*}
    m (\vz_t, t) = \sigmoid(\Dec(\EGNN^h(\vx_t, \vh_t'))), \quad \vh_t' = \mathrm{concatenate}(\vh_t, t).
\end{align*}
The multi-label classifier has the same backbone to the property prediction model in Eq.~{(\ref{eq:g})}, except that the decoder outputs a vector of dimension $L$, and the sigmoid function $\sigmoid$ is adopted for multi-label classification. Similarly to Eq.~{(\ref{eq:g})}, the EGNN in the multi-label classifier guarantees the invariance of the distribution of generated samples according to Theorem~\ref{thm:equiegsde}.

\subsection{Training Objectives of Energy Functions}
\label{sec:energy_train}

\textbf{Time-dependent property prediction model.} Since the quantum property is a scalar, we train the time-dependent property prediction model $g(\vz_t, t)$ using the $\ell_1$ loss
\begin{align*}
    \E_t \E_{q(c, \vz_0, \vz_t)} |g(\vz_t, t) - c|,
\end{align*}
where $t$ is uniformly sampled from $[0, T]$.

\textbf{Time-dependent multi-label classifier.} Since the fingerprint is a bit map, predicting it can be viewed as a multi-label classification task. Thus, we use a time dependent multi-label classifier $m(\vz_t, t)$, and train it using the binary cross entropy loss
\begin{align*}
\E_t \E_{q(c, \vx_0, \vx_t)} \sum\limits_{l=1}^L c_l \log m_l(\vz_t, t) + (1-c_l) \log (1-m_l(\vz_t, t)),
\end{align*}
where $t$ is uniformly sampled from $[0, T]$, and $m_l(\vz_t, t)$ is the $l$-th component of $m(\vz_t, t)$.

\section{Experimental Details}
\label{sec:detail}

\subsection{How to Generate Molecules Targeted to Multiple Quantum Properties}
\label{sec:multi}
When we want to generate molecules with $K$ quantum properties $\vc = (c_1, c_2, \dots, c_K)$, we combine energy functions for single properties linearly as $E(\vz_t, \vc, t) = \sum_{k=1}^K E_k(\vz_t, c_k, t)$, where $E_k(\vz_t, c_k, t) = s_k | g_k(\vz_t, t) - c_k|^2$ is the energy function for the $k$-th property, $s_k$ is the scaling factor and $g_k(\vz_t, t)$ is the time-dependent property prediction model for the $k$-th property. Then we use the gradient of $E(\vz_t, \vc, t)$ to guide the reverse SDE as described in Eq.~{(\ref{eq:eegsde})}.

\subsection{The ``U-bound'' and ``\#Atoms'' Baselines}
\label{sec:baseline}

The ``U-bound'' and ``\#Atoms'' baselines are from \cite{hoogeboom2022equivariant}. The ``U-bound'' baseline shuffles the labels in $D_b$ and then calculate the loss of $\phi_c$ on it, which can be regarded as an upper bound of the MAE. The ``\#Atoms'' baseline predicts a quantum property $c$ using only the number of atoms in a molecule.

\subsection{Generating Molecules with Desired Quantum Properties}
\label{sec:detail_quant}
For the noise prediction network, we use the same setting with EDM~\citep{hoogeboom2022equivariant} for a fair comparison, where the models is trained $\sim$2000 epochs with a batch size of 64, a learning rate of 0.0001 with the Adam optimizer and an exponential moving average (EMA) with a rate of 0.9999.

The EGNN used in the energy function has 192 hidden features and 7 layers. We train 2000 epochs with a batch size of 128, a learning rate of 0.0001 with the Adam optimizer and an exponential moving average (EMA) with a rate of 0.9999.

During evaluation, we need to generate a set of molecules. Following the EDM~\citep{hoogeboom2022equivariant}, we firstly sample the number of atoms in a molecule $M \sim p(M)$ and the property value $c \sim p(c|M)$ (or $c_1 \sim p(c_1|M), c_2 \sim p(c_2|M), \dots, c_K \sim p(c_K|M)$ for multiple properties). Here $p(M)$ is the distribution of molecule sizes on training data, and $p(c|M)$ is the distribution of the property on training data. Then we generate a molecule given $M, c$.


\subsection{Generating Molecules with Target Structures}
\label{sec:detail_fp}

\textbf{Computation of Tanimoto similarity.} Let $S^g$ be the set of bits that are set to 1 in the fingerprint of a generated molecule, and $S^t$ be the set of bits that are set to 1 in the target structure. The Tanimoto similarity is defined as $|S^g \cap S^t| / |S^g \cup S^t|$, where $|\cdot|$ denotes the number of elements in a set.

\textbf{Experimental details on QM9.} For the backbone of the noise prediction network, we use a three-layer MLP with 768, 512 and 192 hidden nodes as embedding to encode the fingerprint and add the output of it with the embedding of atom features $\vh$, which is then fed into the following EGNN. The EGNN has 256 hidden features and 9 layers. We train it ~1500 epoch with a batch size of 64, a learning rate of 0.0001 with the Adam optimizer and an exponential moving average (EMA) with a rate of 0.9999. The energy function is trained with 1750 epoch with a batch size of 128, a learning rate of 0.0001 with the Adam optimizer and an exponential moving average (EMA) with a rate of 0.9999. Its EGNN has 192 hidden features and 7 layers. For the baseline cG-SchNet, we reproduce it using the public code. Since the default data split in cG-SchNet is different with ours, we train the cG-SchNet under the same data split with ours for a fair comparison. We report results at 200 epochs (there is no gain on similarity metric after 150 epochs).  
We evaluate the Tanimoto similarity on the whole test set.

\textbf{Experimental details on GEOM-Drug.} We use the same data split of GEOM-Drug with \cite{hoogeboom2022equivariant}, where training, validation and test set include 554K, 70K and 70K samples respectively. We train the noise prediction network and the energy function on the training set. For the noise prediction network, we use the recommended hyperparameters of EDM~\citep{hoogeboom2022equivariant}, where the EGNN has 256 hidden features and 4 layers, and the other part is the same as that in QM9. We train 10 epoch with a batch size of 64, a learning rate of 0.0001 with the Adam optimizer and an exponential moving average (EMA) with a rate of 0.9999. The backbone of the energy function is the same as that in QM9 and we train the energy function for 14 epoch.
We evaluate the Tanimoto similarity on randomly selected 10K molecules from the test set.

\section{Additional Results}

\subsection{Ablation Study on Noise Prediction Networks and Energy Guidance}

\label{sec:ablation_np}

We perform an ablation study on the conditioning, i.e., use conditional or unconditional noise prediction networks, and the energy guidance. When neither the conditioning nor the energy guidance is adopted, a single unconditional model can't perform conditional generation, and therefore we only report its atom stability and the molecule stability. As shown in Table~\ref{tab:ab_mu}, Table~\ref{tab:ab_cv} and Table~\ref{tab:ab_cv_mu}, the conditional noise prediction network improves the MAE compared to the unconditional one, and the energy guidance improves the MAE compared to a sole conditional model. Both the conditioning and the energy guidance do not affect the atom stability and the molecule stability much.

\begin{table}[H]
\caption{Effects of conditioning and energy guidance on a single quantum property $\mu$ (D).}
\vspace{-.2cm}
\label{tab:ab_mu}
\renewcommand\arraystretch{1.2}
\begin{center}
\begin{tabular}{ccccccc} 
\toprule
Conditional & Guidance & Scale & MAE$\downarrow$ & Novelty$\uparrow$&AS$\uparrow$ & MS$\uparrow$  \\
\midrule
$\times$ & $\times$ & - & - & 82.92 & 98.37 &81.74\\
$\times$ & \checkmark & 0.1 & 1.415 &83.60& 98.33 &81.16\\
$\times$ & \checkmark & 0.5 & 1.241 & 83.84& 98.27&80.75\\
\midrule
\checkmark & - & - & 1.138 &84.04 &98.14 & 80.04\\
\checkmark & \checkmark & 0.1 & 1.071 & 84.36 & 98.18 &80.08\\
\checkmark & \checkmark & 0.5 & 0.935&84.06& 98.20& 80.05 \\
\bottomrule
\end{tabular}
\end{center}
\vspace{-.4cm}
\end{table}

\begin{table}[H]
\caption{Effects of conditioning and energy guidance on a single quantum property $C_v$ (\ucv).}
\vspace{-.2cm}
\label{tab:ab_cv}
\renewcommand\arraystretch{1.2}
\begin{center}
\begin{tabular}{ccccccc} 
\toprule
Conditional & Guidance & Scale & MAE$\downarrow$ &  Novelty $\uparrow$ & AS$\uparrow$ & MS$\uparrow$ \\
\midrule
$\times$ & $\times$ & - & -  & 82.92 & 98.37 &81.74\\
$\times$ & \checkmark & 1 & 2.360&84.15 & 98.36 & 81.44\\
$\times$ & \checkmark & 10 & 1.459 & 84.25 & 98.07 & 79.06\\
\midrule
\checkmark & $\times$ & - & 1.066 &  83.64 & 98.25 & 80.50\\
\checkmark & \checkmark & 1 & 1.038&  83.71&98.21&80.61 \\
\checkmark & \checkmark & 10 & 0.939& 83.92 & 98.06 & 79.21\\
\bottomrule
\end{tabular}
\end{center}
\vspace{-.4cm}
\end{table}

\begin{table}[H]
\caption{Effects of conditioning and energy guidance on multiple quantum properties $C_v$,\ $\mu$.}
\vspace{-.2cm}
\label{tab:ab_cv_mu}
\renewcommand\arraystretch{1.2}
\begin{center}
\begin{tabular}{cccccccc} 
\toprule
Conditional & Guidance & Scale & MAE1$\downarrow$ & MAE2$\downarrow$ & Novelty$\uparrow$ & AS$\uparrow$ & MS$\uparrow$  \\
\midrule
$\times$ & $\times$ & - & - & - & 82.92 & 98.37 &81.74 \\
$\times$ & \checkmark & 1,0.1 & 2.381 & 1.421  & 83.81 & 98.31 & 80.52\\
$\times$ & \checkmark & 10,1 & 1.501 & 1.157  & 84.11 & 98.02 & 78.20\\
\midrule
\checkmark & $\times$ & - & 1.075  &1.163 & 85.45 & 97.99 & 77.02\\
\checkmark & \checkmark & 1,0.1 &1.053 & 1.097 & 85.20& 97.96 & 77.08\\
\checkmark & \checkmark & 10,1 &0.982 & 0.916 & 85.51& 97.58 & 73.94\\
\bottomrule
\end{tabular}
\end{center}
\vspace{-.2cm}
\end{table}

\subsection{Results on Novelty, Atom Stability and Molecule Stability}
\label{sec:more_metrics}

For completeness, we also report novelty, atom stability and molecule stability on 10K generated molecules. Below we briefly introduce these metrics.

\begin{itemize}
    \item Novelty~\citep{simonovsky2018graphvae} is the proportion of generated molecules that do not appear in the training set. Specifically, let $G$ be the set of generated molecules, the novelty is calculated as $1 - \frac{|G \cap D_b|}{|G|}$. Note that the novelty is evaluated on $D_b$, since the reproduced conditional EDM and our method are trained on $D_b$. This leads to an inflated value compared to the one evaluated on the whole dataset~\citep{hoogeboom2022equivariant}.
    \item Atom stability (AS)~\citep{hoogeboom2022equivariant} is the proportion of atoms that have the right valency. Molecule stability (MS)~\citep{hoogeboom2022equivariant} is the proportion of generated molecules where all atoms are stable. We use the official implementation for the two metrics from the EDM paper~\citep{hoogeboom2022equivariant}.
\end{itemize}

As shown in Table~\ref{tab:variant_app} and Table~\ref{tab:multi_app}, conditional EDM and EEGSDE with a small scaling factor have a slightly better stability, and EEGSDE with a large scaling factor has a slightly better novelty in general. The additional energy changes the distribution of generated molecules, which improves the novelty of generated molecules in general. Since there is a tradeoff between novelty and stability (see the caption of Table 5 in the EDM paper~\citep{hoogeboom2022equivariant}), a slight decrease on the stability is possible when the scaling factor is large.

\begin{table}[H]
\caption{Additional results on the novelty, the atom stability (AS) and the molecule stability (MS) of generated molecules targeted to a single quantum property.} \vspace{-.1cm}
\label{tab:variant_app}
\renewcommand\arraystretch{1.2}
\begin{center}
\scalebox{0.75}{
\begin{tabular}{llllllllll} 
\toprule
Method & Novelty$\uparrow$ & AS$\uparrow$ & MS$\uparrow$ & Method & Novelty$\uparrow$ & AS$\uparrow$ & MS$\uparrow$ \\
\cmidrule(lr){1-4} \cmidrule(lr){5-8}
\multicolumn{4}{c}{$C_v$}   &   \multicolumn{4}{c}{$\mu$} \\ 
\cmidrule(lr){1-4} \cmidrule(lr){5-8}
Conditional EDM &  83.64$\pm$0.30& \textbf{98.25}$\pm$0.02 & 80.82$\pm$0.32 &Conditional EDM & 83.93$\pm$0.11 &98.17$\pm$0.04 & \textbf{80.25}$\pm$0.40 \\
EEGSDE ($s$=1) & 83.53$\pm$0.18&\textbf{98.25}$\pm$0.06&\textbf{80.83}$\pm$0.33 &EEGSDE ($s$=0.5) & 83.85$\pm$0.20& \textbf{98.18}$\pm$0.02& \textbf{80.25}$\pm$0.18 \\
EEGSDE ($s$=5) & 83.57$\pm$0.17&98.16$\pm$0.04 &80.22$\pm$0.34& EEGSDE ($s$=1) & 84.43$\pm$0.21&98.17$\pm$0.04&80.06$\pm$0.35\\
EEGSDE ($s$=10) & \textbf{83.78}$\pm$0.49 & 98.03$\pm$0.04 & 79.07$\pm$0.24& EEGSDE ($s$=2) &\textbf{84.62}$\pm$0.31&98.06$\pm$0.02&78.92$\pm$0.30\\
\cmidrule(lr){1-4} \cmidrule(lr){5-8}
\multicolumn{4}{c}{$\Delta \varepsilon$} & \multicolumn{4}{c}{$\varepsilon_{\HOMO}$} \\ 
\cmidrule(lr){1-4} \cmidrule(lr){5-8}
Conditional EDM & 83.93$\pm$0.45 & \textbf{98.30}$\pm$0.04 & \textbf{81.95}$\pm$0.27 &Conditional EDM &84.35$\pm$0.31 & 98.17$\pm$0.07 & 79.61$\pm$0.32 \\
EEGSDE ($s$=0.5)& 84.09$\pm$0.27 & 98.18$\pm$0.06 & 80.99$\pm$0.29 &EEGSDE ($s$=0.1)  &84.44$\pm$0.33& \textbf{98.19}$\pm$0.03&\textbf{79.81}$\pm$0.20 \\
EEGSDE ($s$=1) & 83.91$\pm$0.38 &98.08$\pm$0.04 &79.85$\pm$0.29& EEGSDE ($s$=0.5)  &84.57$\pm$0.28& 98.13$\pm$0.02& 79.40$\pm$0.32\\
EEGSDE ($s$=3) & \textbf{85.17}$\pm$0.69& 97.76$\pm$0.03 & 77.43$\pm$0.37& EEGSDE ($s$=1) & \textbf{84.77}$\pm$0.25&98.08$\pm$0.02&78.90$\pm$0.23\\
\cmidrule(lr){1-4} \cmidrule(lr){5-8}
\multicolumn{4}{c}{$\alpha$} &   \multicolumn{4}{c}{$\varepsilon_{\LUMO}$} \\ 
\cmidrule(lr){1-4} \cmidrule(lr){5-8}
Conditional EDM & \textbf{84.56}$\pm$0.47 &98.13$\pm$0.04& 79.33$\pm$0.30 & Conditional EDM & 84.62$\pm$0.28& 98.26$\pm$0.04 & \textbf{81.34}$\pm$0.29 \\
EEGSDE ($s$=0.5) & 84.35$\pm$0.31 &98.19$\pm$0.03&80.08$\pm$0.34 &EEGSDE ($s$=0.5) &84.37$\pm$0.31& 98.25$\pm$0.04& 81.18$\pm$0.31\\
EEGSDE ($s$=1) & 84.45$\pm$0.33 &98.17$\pm$0.04 &80.04$\pm$0.36& EEGSDE ($s$=1)  &84.70$\pm$0.34& \textbf{98.27}$\pm$0.05 &81.23$\pm$0.75\\
EEGSDE ($s$=3) & 84.19$\pm$0.32& \textbf{98.26}$\pm$0.03 & \textbf{80.95}$\pm$0.35& EEGSDE ($s$=3) & \textbf{84.83}$\pm$0.30&98.14$\pm$0.01 & 80.00$\pm$0.21\\
\bottomrule
\end{tabular}}
\end{center}
\end{table}

\begin{table}[H]
\caption{Additional results on the novelty, the atom stability (AS) and the molecule stability (MS) of generated molecules targeted to multiple quantum properties.}
\vspace{-.2cm}
\label{tab:multi_app}
\begin{center}
\scalebox{0.82}{
\begin{tabular}{llll} 
\toprule
Method & Novelty$\uparrow$  & AS$\uparrow$ & MS$\uparrow$ \\
\midrule
\multicolumn{4}{c}{$C_v$, \ $\mu$}   \\  
\midrule
Conditional EDM & 85.31$\pm$0.43 & \textbf{98.00}$\pm$0.07 & \textbf{77.42}$\pm$0.80 \\
EEGSDE ($s_1$=10, $s_2$=1) & \textbf{85.62}$\pm$0.86& 97.67$\pm$0.08 & 74.56$\pm$0.54\\
\midrule
\multicolumn{4}{c}{$\Delta \varepsilon$, \ $\mu$}   \\  
\midrule
Conditional EDM & 85.06$\pm$0.27 & \textbf{97.96}$\pm$0.00 & \textbf{75.95}$\pm$0.30 \\
EEGSDE ($s_1$=$s_2$=1) &\textbf{85.56}$\pm$0.56& 97.61$\pm$0.04 & 72.72$\pm$0.27\\
\midrule
\multicolumn{4}{c}{$\alpha$, \ $\mu$} \\ 
\midrule
Conditional EDM & 85.18$\pm$0.35 &\textbf{98.00}$\pm$0.06 & \textbf{77.96}$\pm$0.33 \\
EEGSDE ($s_1$=$s_2$=1.5) &\textbf{85.36}$\pm$0.03& 97.99$\pm$0.06 & 77.77$\pm$0.26\\
\bottomrule
\end{tabular}}
\end{center}    
\end{table}

\subsection{Visualization of the Effect of the Scaling Factor}
\label{sec:sf}

We visualize the effect of the scaling factor in Figure~\ref{fig:fp1_sf}, where the generated structures align better as the scaling factor grows.
\begin{figure}[H]
    \centering
    \includegraphics[width=0.8\linewidth]{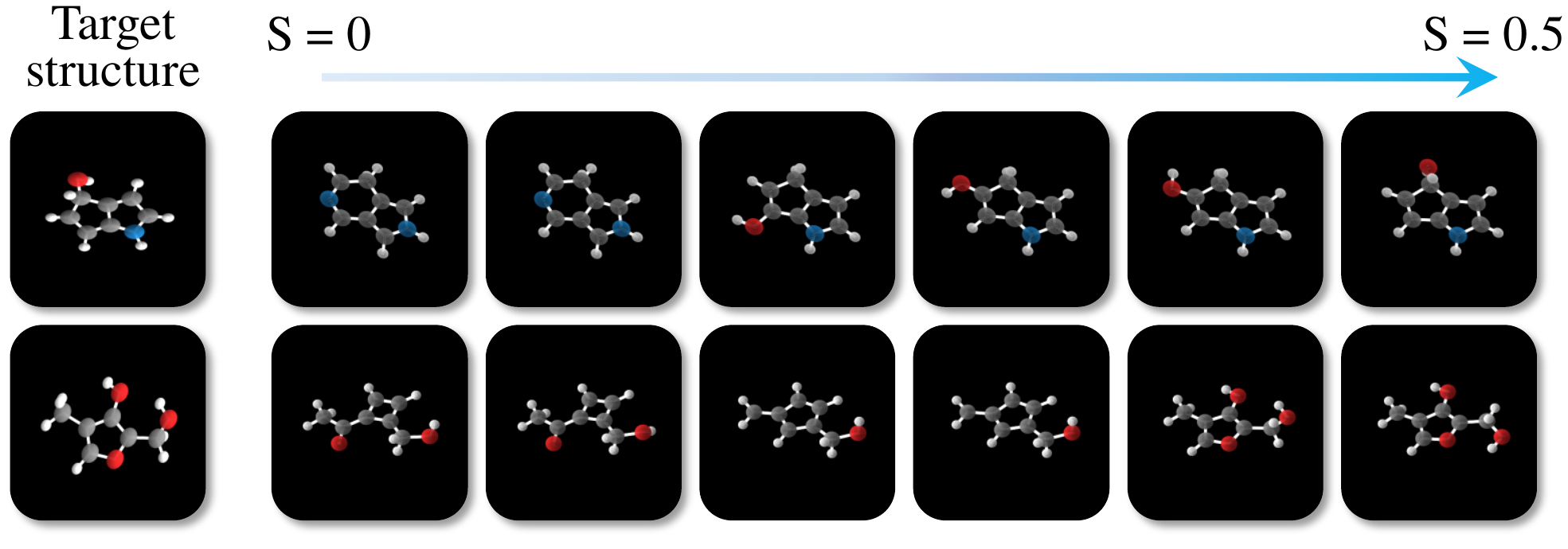}
    \caption{Visualization of the effect of the scaling factor on QM9. As the scaling factor grows, the generated structures align better with the target structure. $S=0$ corresponds to the conditional EDM.} 
    \label{fig:fp1_sf}
\end{figure}

\subsection{Reproduce}
\label{sec:reproduce}

We compare our reproduced results and the original results of conditional EDM \citep{hoogeboom2022equivariant} in Table~\ref{tab:rep}. The results are consistent.

\begin{table}[H]
\caption{The reproduced results of conditional EDM and L-bound are consistent with the original ones.}
\label{tab:rep}
\vspace{-.2cm}
\renewcommand\arraystretch{1.2}
\begin{center}
\scalebox{0.95}{
\begin{tabular}{lrrrrrr} 
\toprule
Quantum property & $\alpha$ & $\Delta \varepsilon$ & $\varepsilon_\HOMO$ & $\varepsilon_\LUMO$ & $\mu$ & $C_v$ \\
Unit & \ualpha & meV & meV & meV & D & \ucv \\
\midrule
Conditional EDM~\citep{hoogeboom2022equivariant} & 2.76 & 655 & 356 & 584 & 1.111 & 1.101\\
Conditional EDM (reproduce) & 2.79 & 674 & 371 & 593 & 1.118 & 1.054\\
\midrule
L-bound~\citep{hoogeboom2022equivariant} & 0.10 &64 & 39 & 36 & 0.043 &0.040\\
L-bound (reproduce) & 0.09 & 65 & 39 & 36 & 0.043 &0.040 \\
\bottomrule
\end{tabular}}
\end{center}
\end{table}

\subsection{Generating Molecules with Target Structures on GEOM-Drug}
\label{sec:drug}

We plot generated molecules on GEOM-Drug in Figure~\ref{fig:drug}, and it can be observed that the atom types of generated molecules with EEGSDE often match the target better than conditional EDM.

\begin{figure}[ht]
    \centering
    \includegraphics[width=\linewidth]{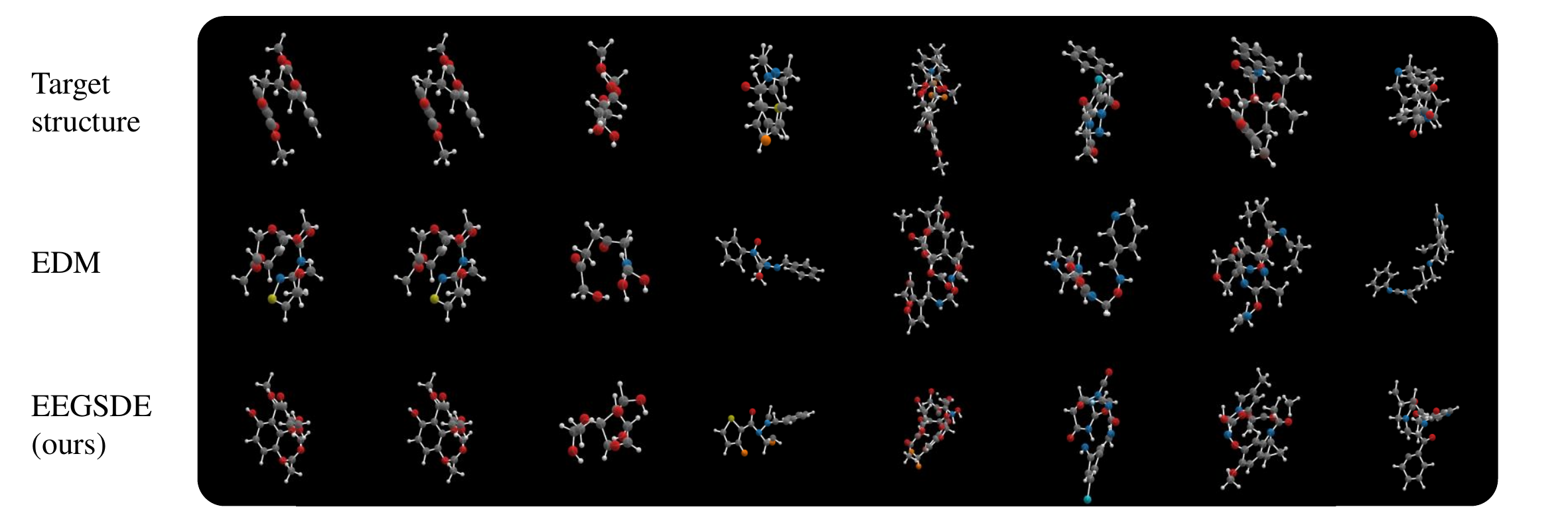}
    \vspace{-.7cm}
    \caption{Generated molecules on GEOM-Drug.}
    \label{fig:drug}
\end{figure}

\subsection{The Distributions of Properties}

We plot the distribution of the properties of the training set, as well as the distribution of the properties of generated molecules (calculated by the Gaussian software). As shown in Figure~\ref{fig:dist}, these distributions match well.

\begin{figure}[ht]
    \centering
    \includegraphics[width=1.0\linewidth]{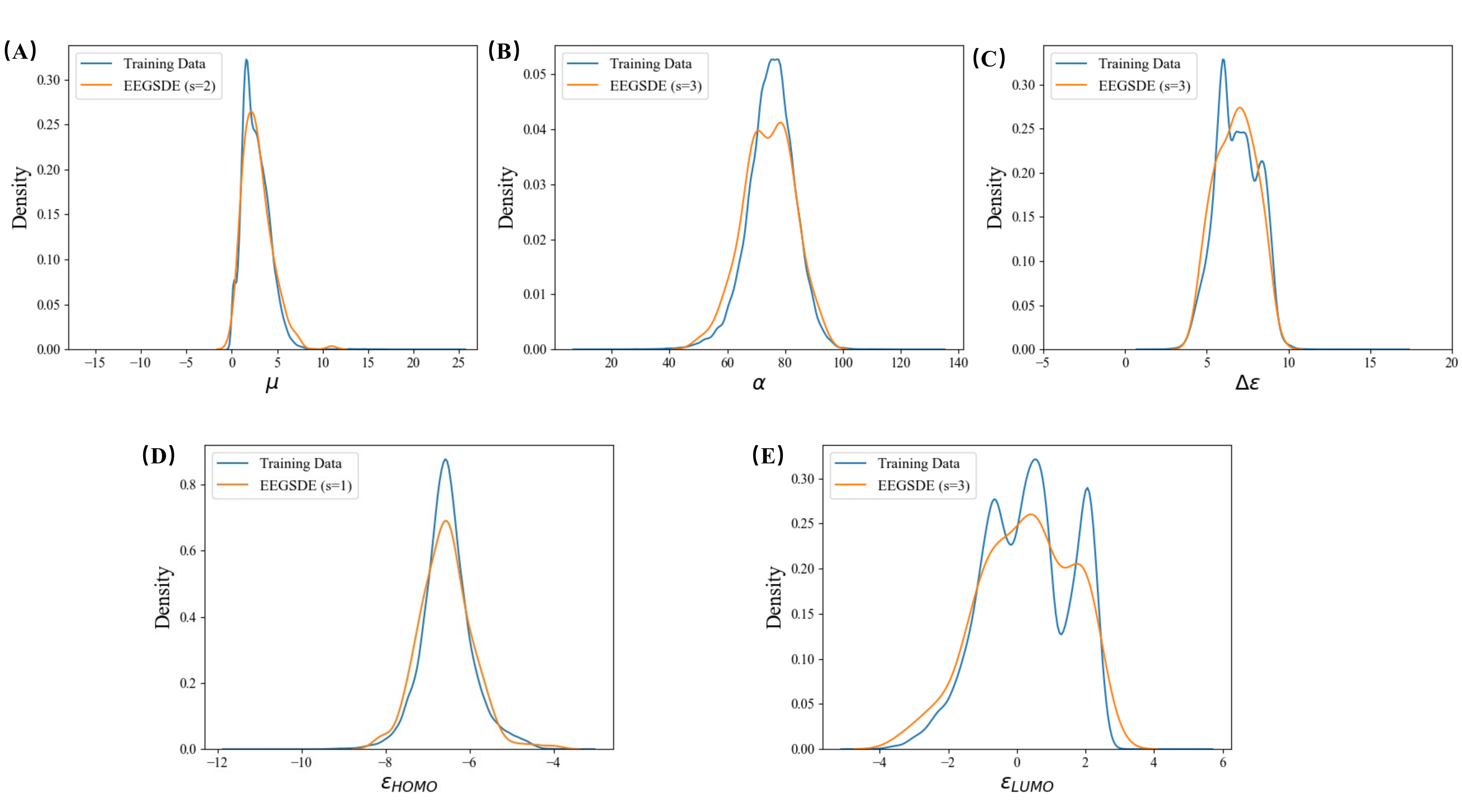}
    \vspace{-.7cm}
    \caption{The distribution of the properties of the training set vs. the distribution of the properties of molecules generated by EEGSDE.}
    \label{fig:dist}
\end{figure}

\end{document}